\algrenewcommand\textproc{\emph}
\algnewcommand\algorithmicparfor{\textbf{parallel for}}
\algnewcommand\algorithmicpardo{\textbf{do}}
\algnewcommand\algorithmicendparfor{\textbf{end\ parallel for}}
\let\oldReturn\Return
\renewcommand{\Return}{\State\oldReturn}
\newtheorem{theorem}{\emph{Theorem}}
\newtheorem{lemma}{\emph{Lemma}}
\theoremstyle{condition}
\theoremstyle{definition}
\newtheorem{definition}{\emph{Definition}}
\theoremstyle{corollary}
\theoremstyle{proposition}
\theoremstyle{remark}
\title{Auditable Register Emulations}
\author{Vinicius V. Cogo \and Alysson Bessani}
\date{LASIGE, Faculdade de Ci\^encias, Universidade de Lisboa, Lisboa, Portugal}
\begin{document}

\maketitle

\begin{abstract}
The widespread prevalence of data breaches amplifies the importance of auditing storage systems.
In this work, we initiate the study of auditable storage emulations, which provide the capability for an auditor to report the previously executed reads in a register.
We precisely define the notion of auditable register and its properties, and establish tight bounds and impossibility results for auditable storage emulations in the presence of faulty storage objects.
Our formulation considers loggable read-write registers that securely store data using information dispersal and support fast reads.
In such a scenario, given a maximum number~$f$ of faulty storage objects
and a minimum number~$\tau$ of data blocks required to recover a stored value, we prove that (1)~auditability is impossible if $\tau \leq 2f $; (2)~implementing a weak form of auditability requires $\tau \geq 3f+1$; and (3)~a stronger form of auditability is impossible.
We also show that signing read requests overcomes the lower bound of weak auditability, while totally ordering operations or using non-fast reads enables strong auditability.
\end{abstract}

\vspace{5mm}
\section{Introduction}
\label{sec:introduction}

Characteristics like cost-effectiveness, high scalability, and ease of use promoted the migration from private storage infrastructures to public multi-tenant clouds in the last decade.
Security and privacy concerns were the main deterrents for this migration since the beginning.
Numerous secure storage systems have been proposing the use of advanced cryptographic primitives to securely disperse data across multiple clouds (i.e., independent administrative domains~\cite{racs}) to reduce the risk of data breaches (e.g., DepSky~\cite{DepSky} and AONT-RS~\cite{resch2011aontrs}).

Given a secure storage system composed of $n$~storage objects, these \emph{information dispersal} techniques split and convert the original data item into $n$~coded blocks~\cite{Krawczyk1993secret,plank2013erasure,Rab89}.
Each coded block is stored in a different base object and clients need to obtain only~$\tau$ out of $n$~coded blocks to effectively recover the original data.
In this type of solution, no base object stores the whole data item, which differentiates information dispersal from fully-replicated storage systems (e.g.,~\cite{abraham2004byzantine,attiya1995sharing,malkhi1998byzantine})---where each object retains a full copy of the data.

Despite the advances in secure storage, the increasing severity of data breaches and the tightening of privacy-related regulations (e.g., GDPR) have been driving the demand for further improvements on this topic.
For instance, \emph{auditability}~\cite{ko2011towards} enables the systematic verification of who has effectively read data in secure storage systems.
Notably, this verification allows one to separate these users from the whole set of clients that are authorised to read data but have never done so.
It is an important step to detect data breaches (including those caused by authorised users, e.g., Snowden's case), analyse leakages, and sanction misuses.

\paragraph*{Problem.}
\label{subsec:problem}

In this paper, we address the following question: \emph{How to extend resilient storage emulations with the capability of auditing who has effectively read data?}
More specifically, we address the problem of protecting storage systems from readers trying to obtain data without being detected (i.e., audit completeness) and protecting correct readers from faulty storage objects trying to incriminate them (i.e., audit accuracy).
The answer must encompass the techniques used in these emulations, such as information dispersal~\cite{Krawczyk1993secret,Rab89} and available quorum systems~\cite{malkhi1998byzantine}, for providing a R/W register abstraction~\cite{lamport1986interprocess}.
We consider information dispersal the primary form of auditable register emulations because alternative solutions that replicate the whole data (even if encrypted) can suffer from faulty base objects leaking data (and encryption keys) to readers without logging this operation.

\paragraph*{Related Work.}
\label{subsec:related}

Several auditing schemes were proposed to verify the integrity of data stored in multi-tenant external infrastructures~\cite{kolhar2017cloud,popa2011enabling}.
They mainly focus on cryptographic techniques to produce retrievability proofs without the need to fetch all data from the system (e.g.,~\cite{juels2007pors}) or on providing public integrity verification (e.g.,~\cite{wang2013privacy}).
However, to the best of our knowledge, \emph{there is no previous work on auditing who has effectively read data in a dispersed storage}.

Another topic related to our work is the accountability, which focuses on making systems' components accountable in a way their actions become non-repudiable~\cite{haeberlen2010case}.
Works in the accountability literature have discussed generic scenarios for networked systems~\cite{Haeberlen2007PeerReview}, described the necessary cryptographic building blocks~\cite{haeberlen2010case}, or how evidences should be stored and protected~\cite{Haeberlen2007PeerReview}.

Several other works have explored the space complexity of fault-tolerant register emulations (e.g.,~\cite{aguilera2003using,cadambe2016information,chockler2007amnesic,chockler2017space}), including disintegrated storage~\cite{berger2018integrated} (e.g.,~\cite{androulaki2014erasure,DepSky,spiegelman2016space}).
However, \emph{none of these works focuses on the requirements for auditing read accesses in a storage system despite the existence of faulty storage objects}.

\paragraph*{Contributions.}
\label{subsec:contribution}
This paper initiates the study of auditing in resilient storage by presenting lower bounds and impossibility results related with the implementation of an \emph{auditable register} on top of $n$~base objects that log read attempts despite the existence of $f$~faulty ones.
Our results show that, given a minimum number~$\tau$ of data blocks required to recover a data item from information dispersal schemes and a maximum number~$f$ of faulty storage objects
(1)~auditability is impossible with $\tau \leq 2f$; (2)~when fast reads (reads executed in a single communication round-trip~\cite{Guerraoui06fastreads}) are supported, $\tau \geq 3f+1$ is required for implementing a weak form of auditability, while a stronger form of auditability is impossible; (3)~signing read requests overcomes the lower bound of weak auditability; and (4)~totally ordering operations or using non-fast reads can still provide strong auditability. 

\section{Preliminaries}
\label{sec:preliminaries}

\paragraph*{System Model.}
\label{subsec:system}

Our system is composed of an arbitrary number of \emph{client processes} $\Pi = \{p_1, p_ 2, ...\}$, which interact with a set of $n$~\emph{storage objects} $\mathds{O}=\{o_1,o_2,...,o_n\}$.
Clients can be subdivided into three main classes: \emph{writers} $\Pi_W = \{p_{w1}, p_{w2}, ...\}$, \emph{readers} $\Pi_R = \{p_{r1}, p_{r2}, ...\}$, and \emph{auditors} $\Pi_A = \{p_{a1}, p_{a2}, ...\}$.
These different roles do not necessarily mean they have to be performed by different processes.

A \emph{configuration} $C$ is a vector of the states of all entities (i.e., processes and objects) in the system.
An \emph{initial configuration} is a specific configuration where all entities of the system are in their initial state.
An \emph{algorithm} $\mathit{Alg}$ defines the behaviour of processes in $\Pi$ and objects in~$\mathds{O}$ as deterministic state machines, which can modify the system's states through \emph{actions} (e.g., invoke and response).
Clients \emph{invoke} operations in the storage objects and wait for \emph{responses}.
An \emph{execution segment} is a (finite or infinite) sequence of alternated configurations and actions.
An \emph{execution} $\rho$ is an execution segment that begins in an initial configuration.
An \emph{event} is the occurrence of an action in an execution.

A finite sequence of invocations and responses compose a \emph{history}~$\sigma$ of the system.
We consider the following relationships between operations within a history.
First, if a history~$\sigma$ contains the invocation of an operation~$op_1$ and its response, then $op_1$ is \emph{complete} in~$\sigma$.
Otherwise, if~$\sigma$ contains only the former, then operation $op_1$ is \emph{incomplete}.
Second, if the response of an operation $op_1$ precedes the invocation of another operation $op_2$ in~$\sigma$, then $op_1$ \emph{precedes} $op_2$.
Third, if operations $op_1$ and $op_2$ do not precede each other in~$\sigma$, then they are \emph{concurrent}.
Fourth, a history~$\sigma$ that does not contain concurrent operations is \emph{sequential}.

\paragraph*{Information Dispersal.}
We consider a high-level shared storage object that stores a value~$v$ from domain $\mathds{V}$ using information dispersal schemes (i.e., a special case of disintegrated storage~\cite{berger2018integrated}).
These schemes provide efficient redundancy since they disperse pieces of information across multiple locations instead of replicating the whole value on each one of them.
Moreover, they allow clients to recover the original stored value only after obtaining a subset of these pieces of information.
Examples of information dispersal schemes include erasure codes~\cite{plank2013erasure,Rab89} and secret sharing~\cite{Krawczyk1993secret,shamir79secretsharing}.

This object provides two high-level operations: \emph{a-write($v$)} and \emph{a-read()}.
A high-level \emph{a-write($v$)} operation converts a value $v \in \mathds{V}$, passed as an argument, into $n$~coded blocks $b_{v_1},b_{v_2},...,b_{v_n}$ from domain $\mathds{B}$, and each coded block~$b_{v_k}$ is stored in the $k^{th}$ base object in $\mathds{O}$ (as described bellow).
The employed techniques guarantee that no untrusted base object~$o_k$ stores an entire copy of value~$v$ and no reader recovers $v$ by obtaining less than a certain fraction of these blocks.

A high-level \emph{a-read()} operation recovers the original value~$v$ from any subset of a specific number~($\tau$) of blocks $b_{v_k}$.
It means readers do not need to execute low-level reads in all $n$~base objects to obtain these~$\tau$ blocks.
Base objects in this work are \emph{loggable R/W registers}.

\paragraph*{Loggable R/W Register Specification.}
\label{subsec:register}

A \emph{loggable read-write (R/W) register} is an object $o_k$ that stores a data block~$b_{v_k}$ from domain $\mathds{B}$ and has a log $L_k$ to store records of every read operation that this base object has responded.
This object $o_k$ provides three low-level operations~\cite{lamport1986interprocess}: 
\begin{itemize}
\item \emph{rw-write($b_{v_k}$)}: writes the data block~$b_{v_k} \in \mathds{B}$, passed as an argument, in this base object $o_k$ and returns an \emph{ack} to confirm that the operation has succeeded.
\item \emph{rw-read()}: returns the data block~$b_{v_k} \in \mathds{B}$ currently stored in this base object $o_k$ (or $\perp \notin \mathds{B}$ if no block has been written on it).
\item \emph{rw-getLog()}: returns the log $L_k$ of this base object $o_k$.
\end{itemize}

The behaviour of a loggable R/W register is given by its sequential specification.
A low-level \emph{rw-write} operation always writes the block passed as an argument and returns an \emph{ack}.
A low-level \emph{rw-read} operation returns the block written by the last preceding \emph{rw-write} operation on this object or $\perp \notin \mathds{B}$ if no write operation has been executed in this object.
Additionally, the low-level \emph{rw-read} operation also creates a record $\langle p_r, \mathit{label}(b_{v_k}) \rangle$ about this read operation in its log $L_k$, where $p_r$ is the identifier of the reader that invoked the \emph{rw-read} operation and \emph{label}($b_{v_k}$) is an auxiliary function that, given a block~$b_{v_k}$, returns a label associated with the value~$v$ (from which the block~$b_{v_k}$ was derived). 
A low-level \emph{rw-getLog} returns the log $L_k$ containing records of all preceding \emph{rw-read} executed in this object $o_k$.
A history~$\sigma$ of low-level operations in this object is linearisable~\cite{herlihy1990linearizability} if invocations and responses of concurrent operations can be reordered to form a sequential history that is correct according to its sequential specification.
This object is linearisable if all of its possible histories are also linearisable.

\paragraph*{}
We define a \emph{providing set} based on the notion introduced by Lamport~\cite{Lamport2006accepting} to abstract the access to multiple base objects.
A providing set is the set of base objects that have both stored a block associated with value~$v$ and responded this block to a reader~$p_r$ in history~$\sigma$.

\begin{definition}[Providing Set]{The set of objects $P^{\sigma}_{p_r,v}$ is providing blocks associated to value~$v$ for the reader $p_r$ in a history~$\sigma$ iff~$\sigma$ contains, for every object $o_k \in P^{\sigma}_{p_r,v}$, an event in which $o_k$ receives a write request to store $b_{v_k}$ and an event in which $o_k$ responds $b_{v_k}$ to a read request from $p_r$.}
\label{def:providing_set}
\end{definition}

Given a finite history~$\sigma$ being audited, there is only one possible providing set $P^{\sigma}_{p_r,v}$ for each pair $\langle p_r, v \rangle$ of reader $p_r$ and value~$v$.
For the sake of simplicity, this set is presented only as $P_{p_r,v}$ from now on.
Every single base object $o_k$ that stored block~$b_{v_k}$ and returned this block to $p_r$ belongs to the providing set $P_{p_r,v}$, independent on how many times $o_k$ has returned $b_{v_k}$ to $p_r$.
There might exist cases where providing sets for different pairs of readers and values are composed of the same objects.

We introduce the notion of an \emph{effective read}, which characterises a providing set large enough that a reader $p_r$ is able to effectively read value~$v$ from the received blocks.

\begin{definition}[Effective read]{
A value $v \in \mathds{V}$ is \emph{effectively read} by a reader $p_r$ in history~$\sigma$ iff $|P^{\sigma}_{p_r, v}| \geq \tau$, i.e., $p_r$ has received at least~$\tau$ distinct blocks~$b_{v_k}$ associated with value~$v$ from different base objects $o_k \in P^{\sigma}_{p_r, v}$ in history~$\sigma$.
}
\label{def:effective_read}
\end{definition}

An effective read depends only on the number of different coded blocks (derived from the same value~$v$) that a reader $p_r$ has already obtained from different base objects in the same history.
Reader~$p_r$ does not necessarily obtain all these blocks in a single high-level \emph{a-read} operation.
Due to unlimited concurrency, there might exist cases where it is accomplished only after receiving responses from many subsequent \emph{a-read} operations.

High-level operations in our system have two additional particularities.
First, high-level read operations (i.e., \emph{a-read}) are \emph{fast reads}.

\begin{definition}[Fast read~\cite{Guerraoui06fastreads}]{An \emph{a-read} operation is \emph{fast} if it completes in a single communication round-trip between the reader and the storage objects.}
\label{def:fast_read}
\end{definition}

Second, we consider multi-writer multi-reader (MWMR) shared objects where multiple writers from $\Pi_w$ execute high-level \emph{a-write} operations and multiple readers from $\Pi_ R$ execute high-level \emph{a-read} operations.

\paragraph*{Fault Model.}
\label{subsec:faults}

Clients (i.e., writers, readers, and auditors) and storage objects that obey their specifications are said to be \emph{correct}.
\emph{Faulty writers} and \emph{faulty auditors} are honest and can only fail by crashing.
\emph{Faulty readers} may crash or request data to only a subset of objects, which may characterise an attack where it attempts to read data without being detected and reported by the audit.

Writers are trusted entities because they are the data owners, who are the most interested party in the auditable register we are proposing in this work.
This assumption is common (e.g.,~\cite{DepSky,malkhi1998byzantine}) since malicious writers can always write invalid values, compromising the application's state.
Furthermore, auditors are also trusted because they are controlled either by the same entity as writers or by third-party entities writers trust.

\emph{Faulty storage objects} can be nonresponsive~(NR)~\cite{Jayanti1998nonresponsive} in an execution since they may crash, omit their values to readers, omit read records to auditors, or record nonexistent read operations. 
More specifically, they can fail by NR-omission when accessing their values and NR-arbitrary when accessing their log records~\cite{Jayanti1998nonresponsive}.
Omitting records to auditors means a faulty object may be helping a reader to evict being detected by the auditor.
Producing records for nonexistent reads characterises an attack where a faulty object may be trying to incriminate a reader.
Furthermore, we assume no more than $f$~storage objects are faulty.

\section{Auditable Register Emulations}
\label{subsec:extended}

We extend the aforementioned register emulation with a high-level operation \emph{a-audit()}.
It uses the fail-prone logs obtained from the \emph{rw-getLog} operation in loggable R/W registers to define an \emph{auditable register emulation}.
This emulation has access to a \emph{virtual log} $L \subseteq \bigcup _{k \in \{1..n\}} L_k$ from which we can infer who has effectively read a value from the register.

The \emph{a-audit} operation obtains records from $L$ and produces a \emph{set of evidences} $E_A$ about the effectively read values.
We define a threshold $t$ as the \emph{required number of collected records} to create an evidence $\mathcal{E}_{p_r,v}$ of an effective read.
Each \emph{evidence} $\mathcal{E}_{p_r,v}$ contains at least $t$~records from different storage objects~$o_k$ proving that $v$ was \emph{effectively read} by reader~$p_r$ in history~$\sigma$.
This threshold~$t$ is a configurable parameter that depends on the guarantees \emph{a-audit} operations provide (defined bellow).

We define an \emph{auditing quorum} $A$ as the set of objects from which the \emph{a-audit} collects fail-prone individual logs to compose the set of evidences.
To ensure audit operations are available in our asynchronous system model, we consider $|A| = n -f$.
We are interested in auditing effective reads because we aim to audit who has actually read a data value~$v$---including faulty readers that do not follow the read protocol and leave operations incomplete.
A correct auditor receives $E_A$ and reports all evidenced reads.

An \emph{auditable register} provides an \emph{a-audit()} operation that guarantees \emph{completeness} (i.e., Definition~\ref{def:completeness}) and at least one form of \emph{accuracy} (i.e., Definitions~\ref{def:weakacc} and \ref{def:strongacc}).

\begin{definition}[Completeness]
\label{def:completeness}
Every value~$v$ \emph{effectively read} by a reader $p_r$ before the invocation of an \emph{a-audit} in history~$\sigma$ is reported in the $E_A$ from this audit operation, i.e., $\forall p_{r} \in \Pi_R, \forall v \in \mathds{V}, |P_{p_r,v}|\geq \tau \implies \mathcal{E}_{p_r,v} \in E_A$.

\end{definition}

\begin{definition}[Weak Accuracy]
\label{def:weakacc}
A correct reader $p_r$ that has never invoked an \emph{a-read} (i.e., never tried to read any value~$v$) before the invocation of an \emph{a-audit} in history~$\sigma$ will not be reported in the $E_A$ from this audit operation, i.e., $\forall p_{r} \in \Pi_R, \forall v \in \mathds{V}, P_{p_r, v} = \varnothing \implies \mathcal{E}_{p_r,v} \notin E_A$.
\end{definition}

\begin{definition}[Strong Accuracy]
\label{def:strongacc}
A correct reader $p_r$ that has never effectively read a value~$v$ before the invocation of an \emph{a-audit} in history~$\sigma$ will not be reported in the $E_A$ from this audit operation as having read $v$, i.e., $\forall p_{r} \in \Pi_R, \forall v \in \mathds{V}, |P_{p_r, v}| < \tau \implies \mathcal{E}_{p_r,v} \notin E_A$.
\end{definition}

While completeness is intended to protect the storage system from readers trying to obtain data without being detected, accuracy focuses on protecting \emph{correct readers} from faulty storage objects incriminating them.

Both variants of accuracy guarantee that auditors report only readers that have invoked read operations.
The difference between them is that strong accuracy guarantees that auditors report only readers that have effectively read some value, while weak accuracy may report readers that have not effectively read any value.
Strong accuracy implies weak accuracy.
Additionally, the accuracy property provides guarantees to correct readers only, which means that auditors may report faulty (honest or malicious) readers in incomplete or partial reads because they had the intention to read the data and aborted it or crashed while doing so.

We consider \emph{weak auditability} when the storage system provides completeness and weak accuracy in audit operations and \emph{strong auditability} when it provides completeness and strong accuracy.

\section{Preliminary Results}
\label{subsec:mechanisms}

In the presence of faulty base objects, it is impossible to audit system in which readers can recover values by accessing a single object (e.g., \emph{non-replicated} and \emph{fully-replicated} systems).
The reason is that, in these solutions, every base object stores a full copy of data and can give it to readers without returning the log record of this operation to auditors.
Therefore, we consider \emph{information dispersal} techniques (e.g.,~\cite{Krawczyk1993secret,Rab89}) as the primary form of auditable register emulations.
Moreover, we assume these information dispersal schemes require $\tau > f$ blocks to recover the data because, otherwise, the $f$~faulty objects may also deliver their blocks to readers and omit the records of these operations from auditors.

\paragraph*{Records Available for Auditing Registers.}

As a starting point, we identify the minimum number of records from each preceding effective read that will be available for any \emph{a-audit} operation.
It relates to the minimum number of correct objects that are present at the smallest intersection of any providing set $P_{p_r,v}$ and an auditing quorum~$A$.

\begin{lemma}{}
\label{lem:min-records}
Any available a-audit operation obtains at least $\tau -2f$ records of every preceding effective read in history~$\sigma$.
\end{lemma}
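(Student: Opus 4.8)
The plan is to reason about the sizes and intersections of three sets of storage objects: the providing set $P_{p_r,v}$ of an effective read, the auditing quorum $A$, and the set of faulty objects. First I would fix an arbitrary effective read of value~$v$ by a reader $p_r$ that completes before some \emph{a-audit} operation in history~$\sigma$. By Definition~\ref{def:effective_read}, effectiveness gives $|P_{p_r,v}| \geq \tau$. By the availability assumption stated just before the lemma, the auditing quorum satisfies $|A| = n - f$, so at most $f$ objects of the whole system lie outside~$A$; hence $|P_{p_r,v} \cap A| \geq \tau - f$. The records we can actually trust to appear in the virtual log $L$ are those produced by \emph{correct} objects in this intersection, since a faulty object in $P_{p_r,v}$ may omit its read record (it is NR-arbitrary on logs). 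Removing the at most $f$ faulty objects leaves $|P_{p_r,v} \cap A \cap \mathit{Correct}| \geq \tau - 2f$.

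Second I would argue that each such correct object in $P_{p_r,v} \cap A$ indeed contributes one usable record to the \emph{a-audit}. By Definition~\ref{def:providing_set}, every $o_k \in P_{p_r,v}$ received a write request storing $b_{v_k}$ and responded $b_{v_k}$ to a read request from $p_r$; if $o_k$ is correct, its sequential specification for \emph{rw-read} forces it to have appended the record $\langle p_r, \mathit{label}(b_{v_k})\rangle$ to its local log $L_k$ at that point. Since $o_k \in A$, the \emph{a-audit} invokes \emph{rw-getLog} on $o_k$ and, because $o_k$ is correct, obtains a log containing that record (linearisability of the loggable register ensures the earlier \emph{rw-read} record is present in the later \emph{rw-getLog} response). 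These records come from distinct objects and therefore count as distinct toward the threshold. Hence the \emph{a-audit} collects at least $|P_{p_r,v} \cap A \cap \mathit{Correct}| \geq \tau - 2f$ records of this effective read.

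Since the effective read was arbitrary, the bound holds for every preceding effective read in~$\sigma$, which is the statement. A small caveat I would address is the degenerate case $\tau - 2f \leq 0$, where the claim is vacuous; combined with the earlier standing assumption $\tau > f$, the interesting regime is $\tau \geq 2f+1$, and the lemma already hints at the $\tau \leq 2f$ impossibility to come. The main obstacle is not any hard inequality — the arithmetic is a two-step inclusion–exclusion — but making the intersection argument airtight: one must be careful that the ``worst case'' placement of faulty objects (all $f$ faults inside $P_{p_r,v} \cap A$, and the $f$ objects outside $A$ also chosen from $P_{p_r,v}$) is simultaneously realisable, and that these two groups of $f$ objects can overlap or be disjoint without ever forcing more than $2f$ objects of $P_{p_r,v}$ to be ``lost'' for auditing. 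Spelling out that the lost objects are exactly $(P_{p_r,v}\setminus A) \cup (P_{p_r,v}\cap A \cap \mathit{Faulty})$, a union of two sets each of size at most~$f$, gives the clean $\tau - 2f$ lower bound.
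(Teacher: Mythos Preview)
Your proposal is correct and rests on the same core observation as the paper: the records guaranteed to reach the auditor are exactly those from correct objects in $P_{p_r,v}\cap A$, and there are at least $\tau-2f$ of them. The presentations differ in emphasis. The paper constructs an explicit worst-case scenario by partitioning the $n$ objects into four groups ($G_1$: the $f$ faulty objects; $G_2$: $\tau-2f$ correct objects; $G_3$: $f$ correct objects; $G_4$: the remaining $n-\tau$), takes $P_{p_{r1},v}=G_1\cup G_2\cup G_3$ and $A=G_1\cup G_2\cup G_4$, observes that exactly $\tau-2f$ records survive, and then argues informally that any other configuration yields at least as many. You go in the opposite direction: you prove the lower bound for an \emph{arbitrary} execution via a two-step set subtraction ($|P_{p_r,v}\setminus A|\le f$ and $|P_{p_r,v}\cap A\cap\mathit{Faulty}|\le f$), and only afterwards discuss realisability of the worst case. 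Your argument is the more rigorous proof of the lemma as stated (which only claims ``at least''), while the paper's concrete partition has the advantage of exhibiting tightness and is reused verbatim in the subsequent impossibility results (Lemmata~\ref{lem:completeness}--\ref{lem:strong} and Theorem~\ref{the:weakcompl}).
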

\begin{proof}
Let us assume a system configuration composed of $n$~storage objects divided into four groups~$G_{1-4}$, as depicted in Figure~\ref{fig:scenario}.
Objects within the same group initially contain the same value~$v$ and an empty log $L_k = \varnothing$.
Without loss of generality, group~$G_1$ contains the $f$~faulty objects of the system, $G_2$ contains $\tau -2f$ correct objects, $G_3$ contains $f$~correct objects, and $G_4$ contains the remaining $n-\tau$ correct objects of the system.

\begin{figure}[h!]
\centering
\includegraphics[width=0.7\textwidth]{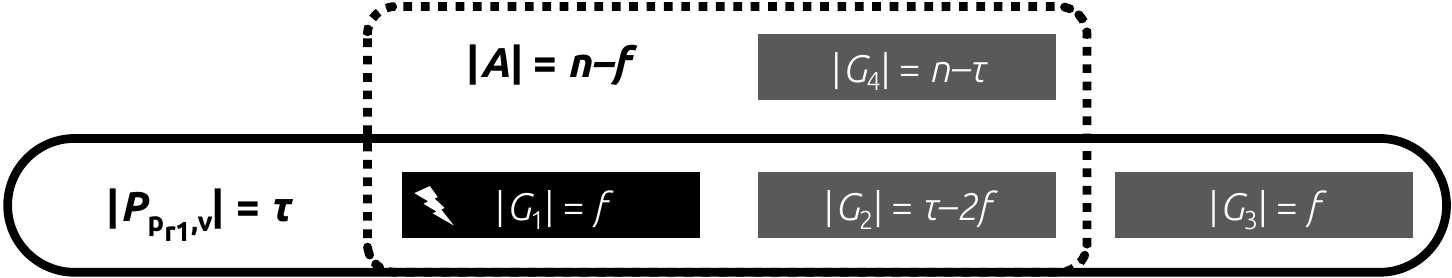}
\caption{\label{fig:scenario}A configuration with a providing set $P_{p_{r1},v}$ of an effective read and an auditing quorum~$A$.}
\end{figure}

A reader $p_{r1}$ effectively reads a value~$v$ after obtaining exactly~$\tau$ blocks $b_{v_k}$ from objects in a providing set $P_{p_{r1},v}$ composed of the $f$~faulty objects in $G_1$ (that do not record the read operation) and $\tau - f$ correct objects in $G_2 \cup G_3$ (that do record it).
In the worst-case scenario, an available \emph{a-audit} operation receives $n-f$ \emph{rw-getLog} responses from an auditing quorum~$A$ composed of the $f$~faulty objects in $G_1$ (that belong to the providing set $P_{p_{r1},v}$ but return empty logs $L_k$ to auditors), $\tau -2f$ correct objects in $G_2$ (that belong to the providing set $P_{p_{r1},v}$ and return the correct records of this read operation), and $n - \tau$ correct objects in $G_4$ (that do not belong to $P_{p_{r1},v}$).

In this case, auditors receive only the $\tau -2f$ records from the objects in $G_2$, which are the only correct objects at the intersection of $P_{p_{r1},v}$ and $A$.
This is the minimum number of records of an effective read available to an auditing quorum since any other configuration makes auditors receive more records of this effective read, i.e., in any alternative scenario, the faulty objects in $G_1$ may correctly record the read operation and return these records to auditors or the auditing quorum~$A$ may also include up to $f$~correct objects from $G_3$ (that belong to $P_{p_{r1},v}$) instead of objects from $G_4$.
\end{proof}

Receiving this minimum number $\tau -2f$ of records from an effective read must be enough to trigger the creation of an evidence for that read in \emph{a-audit} operations.
As a consequence, the value of $t$ must be defined considering this constraint.

\section{Impossibility Results}
\label{sec:resilience}

This section identifies impossibility results for the properties of auditable registers: completeness (Definition~\ref{def:completeness}), weak accuracy (Definition~\ref{def:weakacc}), and strong accuracy (Definition~\ref{def:strongacc}).

\begin{lemma}{}
\label{lem:completeness}
It is impossible to satisfy the \textbf{completeness} of auditable registers with $\tau \leq 2f$.
\end{lemma}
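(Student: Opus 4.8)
The plan is to argue by contradiction, reusing the worst-case configuration of Lemma~\ref{lem:min-records}. Suppose an algorithm $\mathit{Alg}$ emulates an auditable register with $\tau \le 2f$. Being auditable, its \emph{a-audit} satisfies completeness (Definition~\ref{def:completeness}) and at least weak accuracy (Definition~\ref{def:weakacc}); since strong accuracy implies weak accuracy, assuming weak accuracy is without loss of generality. I would build two executions, $\rho_1$ and $\rho_2$, that a correct auditor cannot tell apart, yet in which completeness and weak accuracy demand opposite entries in the returned set of evidences $E_A$.

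In $\rho_1$, a writer completes \emph{a-write($v$)}, its low-level writes reaching at least the objects of $F \cup S$, where $F$ is a set of $f$ objects and $S$ a set of $\tau - f$ objects disjoint from $F$; crucially $|S| = \tau - f \le f$ exactly because $\tau \le 2f$. Next, a \emph{faulty} reader $p_r$ invokes \emph{a-read} and contacts only the objects in $F \cup S$, receiving $b_{v_k}$ from each, so $P_{p_r,v} = F \cup S$ and $|P_{p_r,v}| = \tau$; by Definition~\ref{def:effective_read}, $p_r$ effectively reads $v$. The $f$ faulty objects are those in $F$: they return their block to $p_r$ but do not record this read (as in Lemma~\ref{lem:min-records}), while the correct objects of $S$ do record it, and no other read occurs. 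A correct auditor then invokes \emph{a-audit}; since the system is asynchronous, the $f$ objects of some set $S' \supseteq S$ with $|S'| = f$ can be slow, so the audit completes with \emph{rw-getLog} replies from the auditing quorum $A = \mathds{O} \setminus S'$, $|A| = n - f$. As $S \cap A = \varnothing$, no correct object in $A$ ever served $p_r$'s read, and any faulty objects in $F \cap A$ reply with empty logs; hence the auditor collects only empty logs and obtains no record linking $p_r$ to $v$.

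From here the argument is a dichotomy. If \emph{a-audit} does not place $\mathcal{E}_{p_r,v}$ in $E_A$, completeness already fails in $\rho_1$, since $p_r$ effectively read $v$. Otherwise, consider $\rho_2$, identical to $\rho_1$ except that all base objects are correct and $p_r$ never invokes \emph{a-read}: the same \emph{a-write($v$)} completes, the same auditor runs \emph{a-audit}, and the objects of $S'$ are again merely slow, so the audit collects the same empty logs from the same quorum $A$. The auditor's local computation is therefore identical and returns the same $E_A$, with $\mathcal{E}_{p_r,v} \in E_A$; but in $\rho_2$ reader $p_r$ is correct with $P_{p_r,v} = \varnothing$, contradicting weak accuracy. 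Either way $\mathit{Alg}$ cannot be an auditable register, so $\tau \le 2f$ is impossible.

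The step I expect to require the most care is establishing that $\rho_1$ and $\rho_2$ are genuinely indistinguishable to the auditor. One must verify that the set of \emph{rw-getLog} responses the auditor collects is literally the same in both runs, which rests on three points: (i) the $\tau - f$ correct objects that log $p_r$'s read can all be hidden outside the size-$(n-f)$ auditing quorum, which holds precisely when $\tau - f \le f$, i.e., $\tau \le 2f$; (ii) asynchrony permits the same $f$ objects to be non-responsive to the audit in both executions, whether they are faulty (in $\rho_1$) or just slow (in $\rho_2$); and (iii) the fault model lets the faulty objects of $F$ both serve $p_r$'s block and withhold the corresponding log record. It is also worth spelling out why $p_r$ may be faulty in $\rho_1$ (it contacts only $F \cup S$) while completeness must still report it, whereas its being correct and idle in $\rho_2$ is what makes the weak-accuracy violation legitimate.
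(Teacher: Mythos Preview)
Your proof is correct, but it takes a genuinely different route from the paper's. The paper's argument stays entirely within its threshold framework: it instantiates the configuration of Lemma~\ref{lem:min-records} with $\tau=2f$, observes that the group $G_2$ of correct objects in $P_{p_{r1},v}\cap A$ is empty, and concludes that the auditing quorum $A=G_1\cup G_2\cup G_4$ receives \emph{zero} records of the effective read; since an evidence needs at least one record ($t\ge 1$), completeness fails directly. No second execution and no appeal to accuracy are involved.

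Your approach is a classical two-execution indistinguishability argument. It is more general in one respect---it applies to any deterministic audit procedure, not only to the threshold-based evidence rule baked into the paper's model---but it is less self-contained in another: to force the contradiction you must invoke weak accuracy in $\rho_2$, whereas the paper proves Lemma~\ref{lem:completeness} about completeness in isolation (relying instead on the structural assumption $t\ge 1$) and only brings accuracy into play later in Theorem~\ref{the:weakcompl}. You justify this by appealing to the definition of an auditable register, which indeed requires both properties, so the argument goes through; just be aware that you are effectively proving the $\tau\le 2f$ special case of Theorem~\ref{the:weakcompl} rather than the paper's standalone statement about completeness. The indistinguishability itself is sound: the key inequality $\tau-f\le f$ is exactly what lets all record-holding correct objects sit outside the auditing quorum, and your points (i)--(iii) cover the necessary checks.
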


\begin{proof}
Consider an auditable register implemented using the same four subset groups of objects $G_{1-4}$ from Lemma~\ref{lem:min-records} (depicted in Figure~\ref{fig:scenario}).
Without loss of generality, let us assume that $\tau=2f$.
As a consequence, group $G_2$ will be empty because $|G_2| = \tau -2f = 0$.

A reader $p_{r1}$ obtains~$\tau$ coded blocks for value~$v$ from a providing set $P_{p_{r1},v}$ composed of $f$~faulty objects in $G_1$ (which do not log the operation) and $\tau-f = f$ correct nodes in $G_3$ (that log the operation).
This reader can decode the original value~$v$ after receiving these~$\tau$ blocks, performing thus an effective read.
Consequently, $|G_2| = 0$ and any auditing quorum~$A$ that does not include at least one object from $G_3$ will not receive any record for the read operation from reader $p_{r1}$.
For instance, an auditing quorum $A = G_1 \cup G_2 \cup G_4$ (e.g., from Figure~\ref{fig:scenario}) receives no record of this read in this history, which violates the completeness property since an evidence cannot be created without records (i.e., $t \geq 1$).
\end{proof}

\begin{lemma}{It is impossible to satisfy the \textbf{weak accuracy} of auditable registers with $t \leq f$.}
\label{lem:weak}
\end{lemma}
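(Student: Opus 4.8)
The plan is to exhibit an execution in which a correct reader that never invoked any \emph{a-read} is nonetheless reported, whenever the evidence threshold satisfies $t \leq f$. The key observation is that the $f$~faulty storage objects can be NR-arbitrary with respect to their logs (per the fault model), so they may fabricate a record $\langle p_r, \mathit{label}(b_{v_k}) \rangle$ for a read that never happened. If $t \leq f$, then the $f$~fabricated records from the $f$~faulty objects already meet the required number of collected records, so the \emph{a-audit} operation is forced to create an evidence $\mathcal{E}_{p_r,v}$ and include it in $E_A$.

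Concretely, I would set up a configuration (reusing the $G_{1\text{-}4}$ partition from Lemma~\ref{lem:min-records} for consistency of presentation) in which some value~$v$ has been written to all objects by a correct writer, and a correct reader $p_r$ has performed \emph{no} operation at all --- so $P_{p_r,v} = \varnothing$. Now let the $f$~faulty objects in $G_1$ each place a bogus record $\langle p_r, \mathit{label}(b_v) \rangle$ in their logs. When a correct auditor runs \emph{a-audit}, it collects logs from an auditing quorum $A$ with $|A| = n-f$; choose $A$ so that it includes all of $G_1$. Then the auditor sees $|G_1| = f \geq t$ records attributing a read of $v$ to $p_r$, which by the definition of the threshold $t$ triggers the creation of $\mathcal{E}_{p_r,v}$, so $\mathcal{E}_{p_r,v} \in E_A$. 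This contradicts weak accuracy (Definition~\ref{def:weakacc}), since $p_r$ is correct and $P_{p_r,v} = \varnothing$.

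The subtle point --- and the part I would be most careful about --- is the indistinguishability argument: I must argue that the auditor \emph{cannot} tell this execution apart from a legitimate one in which $p_r$ really did read $v$ from those $f$~objects (plus, say, further correct objects), and therefore the audit algorithm, which is deterministic, must behave identically in both. In particular I would invoke the completeness requirement here: in a companion execution where $p_r$ genuinely effectively reads $v$ via a providing set whose intersection with $A$ is exactly those $f$~faulty objects now behaving correctly and logging honestly, completeness (Definition~\ref{def:completeness}) forces $\mathcal{E}_{p_r,v} \in E_A$ from just $t \leq f$ honest records. Since the local state of $A$ at the audit --- the multiset of $f$~records naming $p_r$ and $v$ --- is identical in the two executions, the auditor must produce $\mathcal{E}_{p_r,v}$ in the first execution as well, violating weak accuracy. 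This ties the lower bound directly to the tension between completeness (which needs a \emph{small} trigger, since only $\tau-2f$ honest records are guaranteed, cf. Lemma~\ref{lem:min-records}) and accuracy (which needs a \emph{large} trigger, to rule out fabrication by $f$~faulty objects).
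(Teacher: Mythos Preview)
Your core construction---faulty objects fabricate $f$ records for a reader $p_r$ who never invoked \emph{a-read}, and any auditing quorum containing those faulty objects then holds $f\geq t$ records, triggering an evidence---is exactly the paper's argument, and it is already a complete proof. The paper's proof is that short: since $t$ is \emph{defined} as the number of collected records that suffices to create an evidence, $f\geq t$ fabricated records force $\mathcal{E}_{p_r,v}\in E_A$ while $P_{p_r,v}=\varnothing$, and weak accuracy fails.

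Your second paragraph, however, introduces an unnecessary and somewhat problematic indistinguishability detour. You invoke completeness to argue that in a companion execution the auditor ``must'' produce the evidence from $f$ honest records; but completeness is \emph{not} a hypothesis of this lemma, which concerns weak accuracy in isolation. Bringing it in would only yield the weaker statement ``completeness $+$ weak accuracy is impossible with $t\leq f$,'' which is subsumed by Theorem~\ref{the:weakcompl}. Moreover, the companion execution you sketch---an effective read whose providing set intersects $A$ in exactly $f$ objects---need not exist for arbitrary $\tau$: an effective read has $|P_{p_r,v}|\geq\tau$, and forcing only $f$ of those into $A$ (with $|A|=n-f$) requires $\tau-f$ providers to lie outside $A$, hence $\tau\leq 2f$, which the lemma does not assume. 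None of this is needed: in the paper's model the threshold $t$ is an explicit parameter of the evidence mechanism, so no indistinguishability argument is required. Drop the second paragraph and your proof coincides with the paper's.
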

\begin{proof}

Consider an auditable register implemented using $n-f$ correct objects and $f$~faulty objects.
A read has never been invoked in this history of the system, but the $f$~faulty objects create records for a nonexistent read operation from a reader $p_{r}$.
Any auditing quorum~$A$ that includes these faulty objects will receive $f$~records for a read operation that has never been invoked.
If the threshold $t \leq f$ is enough to create an evidence and report an effective read, then a correct auditor must report it, violating the weak accuracy property.
\end{proof}

The next theorem shows in which conditions these two properties cannot be supported simultaneously.

\begin{theorem}{It is impossible to satisfy both \textbf{completeness} and \textbf{weak accuracy} of auditable registers with $\tau \leq 3f$.}
\label{the:weakcompl}
\end{theorem}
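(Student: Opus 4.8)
The plan is to combine the two preceding lemmas through the threshold parameter $t$ — the number of records required to create an evidence. Completeness will impose an upper bound on $t$, weak accuracy will impose a lower bound, and the two bounds are incompatible precisely when $\tau \leq 3f$, so no admissible value of $t$ exists and the two properties cannot be satisfied together.

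First I would extract the upper bound from Lemma~\ref{lem:min-records} together with the discussion following it. The worst-case configuration in that lemma exhibits an execution in which a reader $p_{r1}$ effectively reads $v$ yet an available \emph{a-audit} collects only $\tau - 2f$ records of that read: the $f$ faulty objects of the providing set return empty logs to the auditor, and the chosen auditing quorum~$A$ overlaps the providing set $P_{p_{r1},v}$ only in the $\tau - 2f$ correct objects of $G_2$. Since every evidence $\mathcal{E}_{p_{r1},v}$ must, by definition, contain at least $t$ records from distinct storage objects, completeness applied to this execution forces $t \leq \tau - 2f$.

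Next I would invoke Lemma~\ref{lem:weak}, which states that weak accuracy is impossible when $t \leq f$; equivalently, weak accuracy requires $t \geq f+1$, the witnessing execution being one in which $f$ faulty objects forge records for a read that never occurred. Combining the two inequalities, weak auditability demands a single threshold $t$ satisfying $f+1 \leq t \leq \tau - 2f$, which is feasible only if $\tau - 2f \geq f+1$, i.e., $\tau \geq 3f+1$. Hence whenever $\tau \leq 3f$ there is no admissible $t$, and completeness and weak accuracy cannot both hold.

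The one delicate point is justifying that completeness genuinely forces $t \leq \tau - 2f$ rather than some weaker bound: this rests on the evidence being built, by definition, from at least $t$ records of distinct objects, and on the worst-case execution of Lemma~\ref{lem:min-records} making exactly $\tau - 2f$ such records available to the \emph{a-audit}, with no other information it could use to certify this particular read of $p_{r1}$. It is also worth observing that the two witnessing executions can be fused into a single one in which the auditor cannot distinguish a correct reader's effective read whose records have been hidden by faulty objects from forged records produced by faulty objects for a reader that never read — which is the underlying reason the two bounds clash.
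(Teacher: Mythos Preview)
Your proposal is correct and follows essentially the same approach as the paper: both combine Lemma~\ref{lem:min-records} (completeness forces $t \leq \tau - 2f$) with Lemma~\ref{lem:weak} (weak accuracy forces $t \geq f+1$) to conclude that no admissible threshold~$t$ exists when $\tau \leq 3f$. The paper instantiates the boundary case $\tau = 3f$ explicitly and reads off $\tau - 2f = f$ records, whereas you phrase it as a pair of incompatible inequalities on~$t$ for all $\tau \leq 3f$; the underlying argument is the same.
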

\begin{proof}

Consider an auditable register implemented using the same four subset groups of objects $G_{1-4}$ from Lemma~\ref{lem:min-records} (depicted in Figure~\ref{fig:scenario}).
Without loss of generality, let us assume that $\tau = 3f$ is enough to satisfy both the completeness and weak accuracy properties.
As a consequence, group $G_2$ will contain $|G_2| = \tau -2f = f$ objects.

A reader $p_{r1}$ obtains~$\tau$ blocks for value~$v$ from a providing set $P_{p_{r1},v}$ composed of $f$~blocks from the faulty objects in $G_1$ (which do not log the operation) and other $\tau -f$ correct objects in $G_2 \cup G_3$ (that log the operation).
The auditing quorum~$A$ will produce $\tau -2f = f$ records (Lemma~\ref{lem:min-records}).
Evidences must be created using these $f$~records to report the effective read and satisfy completeness.
However, as proved in Lemma~\ref{lem:weak}, it is impossible to guarantee weak accuracy with $t \leq f$.
\end{proof}

Now, we turn our attention to strong accuracy, i.e., the capability of an auditor to report exactly which value~$v$ each reader $p_r$ has effectively read.

\begin{lemma}{It is impossible to satisfy the \textbf{strong accuracy} of auditable registers with $t < \tau + f$.}
\label{lem:strong}
\end{lemma}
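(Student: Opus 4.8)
The plan is to argue by contradiction: suppose there is an auditable register whose threshold satisfies $t < \tau + f$ (i.e.\ $t \le \tau + f - 1$) and that nonetheless satisfies strong accuracy, and then exhibit a history $\sigma$ in which a \emph{correct} reader whose providing set has size $\tau - 1$ is reported as having read $v$. I would reuse the configuration of Lemma~\ref{lem:min-records}: $n$ objects, $f$ of them faulty. In $\sigma$ an \emph{a-write}($v$) is in progress and has reached only a set $S$ of $\tau - 1$ correct objects, so exactly those objects currently store the block $b_{v_k}$. A correct reader $p_{r1}$ then invokes one \emph{a-read}; since reads are fast, this is a single round, and in $\sigma$ the only objects that return $b_{v_k}$ to $p_{r1}$ are those in $S$, each of which (by the loggable-register specification) writes $\langle p_{r1}, \mathit{label}(b_{v_k})\rangle$ into its log. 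Hence $P_{p_{r1},v}=S$, so $|P_{p_{r1},v}| = \tau - 1 < \tau$: $p_{r1}$ has not effectively read $v$, and being correct it is exactly the kind of reader strong accuracy must protect. In the same history the $f$ faulty objects each fabricate a record $\langle p_{r1}, \mathit{label}(b_{v_k})\rangle$ for this nonexistent read, which the NR-arbitrary part of the fault model permits.

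Next I would run the audit. Pick an auditing quorum $A$ with $|A| = n - f$ that contains the $f$ faulty objects together with $t - f$ of the objects in $S$ (none from $S$ being needed when $t \le f$); this is feasible because $t - f \le \tau - 1$ by the assumed bound and because $t \le n - f$ must hold for any evidence to be creatable at all. Then $A$ delivers at least $t$ records from distinct objects, all attesting that $p_{r1}$ read the label of $v$, so the audit creates $\mathcal{E}_{p_{r1},v}$ and a correct auditor puts it in $E_A$, reporting $p_{r1}$ as having read $v$, which contradicts strong accuracy. I would conclude that $t \ge \tau + f$ is necessary.

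The step I expect to be the crux is justifying that collecting $t$ such records \emph{forces} the audit to emit the evidence: since the auditor cannot tell faulty objects from correct ones, I would argue that the gathered logs in my history are indistinguishable from those produced by a legitimate history in which $\tau - 1 + f$ correct objects all genuinely returned $b_{v_k}$ to $p_{r1}$, i.e.\ a genuine effective read that completeness obliges the audit to report; hence it must report in my history too. Building that companion history so that the chosen quorum observes exactly the same records in both is the delicate part (the resilience side-conditions $\tau - 1 + f \le n - f$ and $t \le n - f$ are routine). As a sanity check one may also note that, because an auditable register must satisfy completeness, Lemma~\ref{lem:min-records} already forces $t \le \tau - 2f$, so the bound $t < \tau + f$ is comfortably met and strong accuracy with fast reads is in fact unattainable. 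Finally, it is worth flagging where \emph{fast} reads are used: since $p_{r1}$'s \emph{a-read} is a single round, $p_{r1}$ has no later phase in which to complete or disavow the partial read of $v$, which is precisely what lets a correct reader be pinned with a providing set of size exactly $\tau - 1$; with non-fast reads this leverage disappears, consistent with the paper's later claim.
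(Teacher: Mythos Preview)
Your argument is correct and follows the same core idea as the paper: exhibit a history in which a correct reader's providing set for~$v$ has size exactly $\tau-1$, let the $f$ faulty objects forge matching log entries, and observe that an auditing quorum can then collect $\tau+f-1\geq t$ records and must therefore produce $\mathcal{E}_{p_r,v}$.

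There are two small differences worth noting. First, the paper engineers the size-$(\tau-1)$ providing set via an \emph{overwrite}: all objects start with~$v$, and an incomplete write of a second value~$x$ reaches every group except two that together hold $\tau-1$ correct objects; the reader then picks up~$v$ only from those. Your construction instead makes the write of~$v$ itself incomplete, reaching exactly $\tau-1$ correct objects. Both are valid; yours is slightly leaner for this lemma, while the paper's five-group decomposition is deliberately set up so it can be reused verbatim in the subsequent impossibility of completeness~$+$ strong accuracy. Second, to justify that the auditor \emph{must} emit the evidence, the paper simply appeals to the definition of~$t$ as the threshold (``$t$ records $\Rightarrow$ evidence''), whereas you add an indistinguishability argument against a companion history in which the same quorum sees identical logs but the read is genuinely effective, invoking completeness. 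Your route is more careful and would survive even if one weakened the paper's threshold-based framing of~$t$; the paper's route is shorter given its own definitions. Your closing remarks (the $t\le\tau-2f$ observation and the role of fast reads) are correct but belong to the next theorem rather than to this lemma.
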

\begin{proof}

Consider an auditable register implemented using five subset groups of objects $G_{1-5}$, as depicted in Figure~\ref{fig:scenario2}.
The main difference to the scenario from Figure~\ref{fig:scenario} is that the group~$G_4$ was subdivided into two groups: $G_4$ with $2f-1$ objects and $G_5$ with $n-\tau-2f+1$ objects.

\begin{figure}[h!]
\centering
\includegraphics[width=0.7\textwidth]{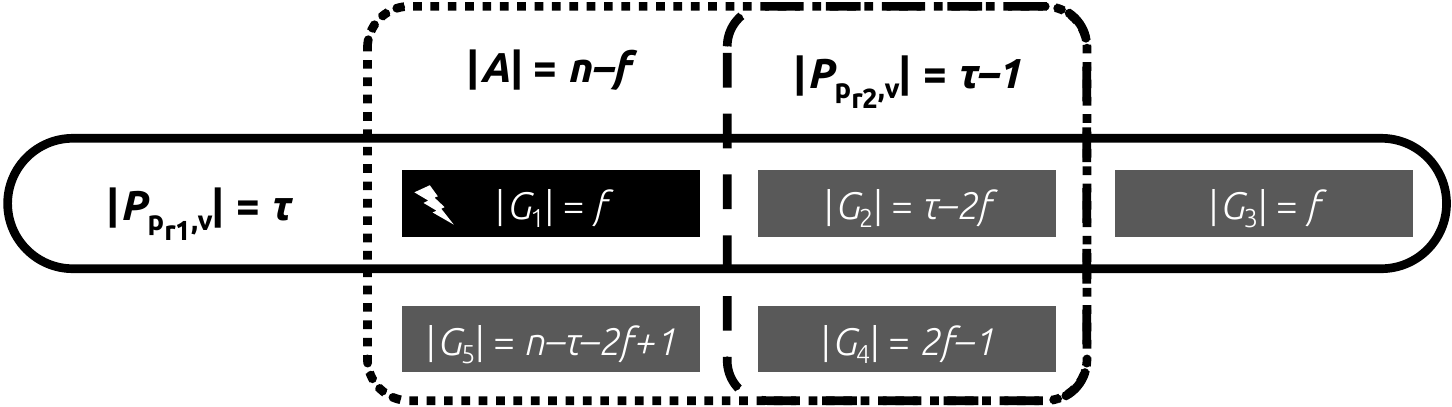}
\caption{\label{fig:scenario2}A configuration with two providing sets and an auditing quorum~$A$. Only $P_{p_{r1},v}$ represents an effective read.}
\end{figure}

This configuration has an incomplete write operation\footnote{An incomplete write operation may be caused, for instance, by network delays in clients' requests or a writer that has crashed in the middle of a write operation.} of a value $x \in \mathds{V}$ that has arrived to objects from all groups but $G_2$ and $G_4$.
A reader $p_{r2}$ obtains blocks of value~$v$ from objects from groups $G_2$ and $G_4$, which log the record for the read, and form a providing set with size $|P_{p_{r2},v}| = \tau - 2f +2f -1 = \tau -1$.
In this history, due to the incomplete write operation of value $x$, reader $p_{r2}$ has not received~$\tau$ blocks for value~$v$, which means it cannot recover~$v$.

However, the $f$~faulty objects in $G_1$ decide to mimic groups $G_2$ and $G_4$ and log the record for the read of value~$v$ by reader $p_{r2}$.
Any auditing quorum~$A$ that includes the objects from groups $G_1$, $G_2$, and $G_4$ will return $\tau+f-1$ records.
If $t < \tau+f$ is enough to produce an evidence of an effective read, then the auditor will report this read from reader $p_{r2}$ (i.e., not an effective read), violating the strong accuracy property.
\end{proof}

\begin{theorem}{It is impossible to satisfy both \textbf{completeness} and \textbf{strong accuracy} of auditable registers.}
\label{the:strongcompl}
\end{theorem}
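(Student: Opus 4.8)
The plan is to show that \textbf{completeness} and \textbf{strong accuracy} impose contradictory requirements on the single global threshold~$t$. First I would argue that completeness forces $t \leq \tau - 2f$: by Lemma~\ref{lem:min-records} there is an execution in which an effective read of some value~$v$ by a reader $p_{r1}$ leaves only $\tau - 2f$ of its records inside the intersection of the providing set $P_{p_{r1},v}$ and the auditing quorum~$A$ that the \emph{a-audit} contacts; since completeness demands $\mathcal{E}_{p_{r1},v} \in E_A$ and an evidence needs at least~$t$ records from distinct objects, we must have $t \leq \tau - 2f$.

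Second, I would invoke Lemma~\ref{lem:strong}, which already shows that strong accuracy is impossible with $t < \tau + f$; hence strong accuracy forces $t \geq \tau + f$. Combining the two inequalities yields $\tau + f \leq t \leq \tau - 2f$, i.e. $3f \leq 0$, which is impossible because there is at least one faulty object ($f \geq 1$, and recall $\tau > f$). Therefore no choice of~$t$ simultaneously satisfies both properties.

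For a self-contained argument that does not merely chain the two lemmas, I would build a single execution on the five groups $G_{1..5}$ of Figure~\ref{fig:scenario2}, reusing the incomplete write of~$x$ and the inflated non-effective read of~$v$ by the correct reader $p_{r2}$ (producing $\tau + f - 1$ records, of which $f$ are forged by the faulty objects in~$G_1$), and adding an honest effective read of the same value~$v$ by a reader $p_{r1}$ whose providing set is $G_1 \cup G_2 \cup G_3$ (exactly $\tau$ blocks). Choosing the auditing quorum $A = G_1 \cup G_2 \cup G_4 \cup G_5$ (whose size is $n-f$) leaves only the $\tau - 2f$ records of~$G_2$ for $p_{r1}$'s read, so completeness forces an evidence to be built from $\tau - 2f$ records, i.e. $t \leq \tau - 2f$; but in the very same audit, $p_{r2}$'s non-effective read already has $\tau + f - 1 \geq \tau - 2f \geq t$ records available, so the auditor must also report $\mathcal{E}_{p_{r2},v}$, contradicting strong accuracy for the correct reader $p_{r2}$.

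I expect the only delicate point to be the bookkeeping that makes both scenarios coexist in one execution: checking that the group sizes still sum to $n$ and that $|A| = n-f$, that the objects in $G_2$ (and $G_4$) may legitimately log and serve both readers, and that $p_{r2}$ remains correct (it simply fails to collect~$\tau$ blocks of~$v$ because of the concurrent incomplete write of~$x$). The inequality $\tau - 2f < \tau + f$ itself is immediate, so the essential work is arranging the configuration, not the arithmetic.
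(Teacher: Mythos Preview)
Your self-contained argument is essentially the paper's own proof: it builds the identical execution on the five groups $G_{1..5}$ of Figure~\ref{fig:scenario2}, with $p_{r1}$'s effective read over $G_1\cup G_2\cup G_3$, $p_{r2}$'s non-effective read inflated by the faulty objects in $G_1$ to $\tau+f-1$ records, and the auditing quorum $A=G_1\cup G_2\cup G_4\cup G_5$, and then observes that no threshold~$t$ can separate the $\tau-2f$ records of $p_{r1}$ from the $\tau+f-1$ records of $p_{r2}$. Your first, modular route (chaining Lemma~\ref{lem:min-records} and Lemma~\ref{lem:strong} to squeeze~$t$ from both sides) is also valid, with the minor caveat that you are really invoking the tight \emph{constructions} inside those proofs rather than the lemma statements themselves---Lemma~\ref{lem:min-records} as stated gives only a lower bound on the number of records available, not the matching upper bound you need.
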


\begin{proof}
Consider an auditable register implemented using the same five subset groups of objects $G_{1-5}$ from Lemma~\ref{lem:strong} (depicted in Figure~\ref{fig:scenario2}).
Without loss of generality, let us assume $\tau \geq 2f+1$ is required to guarantee the completeness of audit operations (Lemma~\ref{lem:completeness}).
As a consequence, $G_2$ contains $\tau -2f \geq 1$ object and $G_5$ contains $n-\tau-2f+1 \leq n - 4f$ objects.

A reader $p_{r1}$ executes a read operation in groups $G_{1-3}$.
Faulty objects from $G_1$ return their blocks for value~$v$ (but do not log the read operation).
Reader $p_{r1}$ receives exactly~$\tau$ correct data blocks and obtains the original value~$v$ (i.e., an effective read).

A writer $p_{w}$ leaves incomplete an operation to write a value $x$, similarly to the configuration of Lemma~\ref{lem:strong}, which has been received by objects in all groups but $G_2$ and $G_4$.
A second reader $p_{r2}$ executes a read operation that receives $\tau-1$ blocks for value~$v$ from a providing set $P_{p_{r2},v}$.
These $\tau-1$ blocks are insufficient to recover the original value~$v$ (i.e., it does not represent an effective read).
Faulty objects from $G_1$ decide to mimic groups $G_2$ and $G_4$ and log the read of $v$ by $p_{r2}$.

After these two reads, any auditing quorum~$A$ that includes $G_4$ receives the same number or more records 
of the read from $p_{r2}$ (i.e., not an effective read) than records
of the effective read from $p_{r1}$.
As a consequence, it is impossible to define a single reporting threshold~$t$ to be used to produce evidences without violating either completeness or strong accuracy.

For instance, the auditing quorum $A = G_{1} \cup G_{2} \cup G_{4} \cup G_{5}$ receives $\tau+f-1$ records $\langle p_{r2}, \emph{label}(b_{v_k}) \rangle_k$ and $\tau-2f$ records $\langle p_{r1}, \emph{label}(b_{v_k}) \rangle_k$.
Defining $t \leq \tau-2f$ is enough to report the effective read from $p_{r1}$, but it violates the strong accuracy by also reporting the (not effective) read from $p_{r2}$.
Alternatively, defining $t \geq \tau+f$ satisfies the strong accuracy because it does not report the read from $p_{r2}$, but it violates the completeness because it does not report the effective read from $p_{r1}$.
\end{proof}

\textbf{Remark:} Adding any number of objects to the scenario of Figure~\ref{fig:scenario2} does not change the impossibility result of Theorem~\ref{the:strongcompl}.
The reason is that with the unlimited concurrency in our model, each additional object may have a value different from all values stored in the other objects.

\section{Audit Algorithm}
\label{sec:algs}

We present a generic auditability algorithm for register emulations and prove that all bounds from the previous section (Lemmata~\ref{lem:completeness}--\ref{lem:strong} and Theorem~\ref{the:weakcompl}) are tight in our system model when using this algorithm.
Concretely, we prove that this algorithm satisfies the completeness property with $t \geq 1$ and $\tau \geq 2f+1$ (Lemma~\ref{lem:alg-completeness}) and the weak accuracy with $t \geq f+1$ (Lemma~\ref{lem:alg-weak}).
Then, we prove that it supports both completeness and weak accuracy with $t \geq f+1$ and $\tau \geq 3f+1$ (Theorem~\ref{the:alg-complweak}).
Finally, we prove it supports the strong accuracy property alone with $t \geq \tau+f$ (Lemma~\ref{lem:alg-strong}).

The implementation for \emph{a-audit()} is presented in Algorithm~\ref{alg:audit}.
It starts with an empty set $E_A$ that will be used to store the evidences attesting the effective reads and an array of empty logs to store the logs it will receive from objects (Line~2).
It then queries $n$~storage objects to obtain the list of records on objects' logs (i.e., \emph{$o_k$.getLog()}), waits the response from at least $n-f$ of them, and stores these responses in the array of logs $L$ (Lines~3--4).

For each previously seen record (Line~5), it adds the identifier of every object containing this record to an evidence $\mathcal{E}_{p_r,v}$ (Line~6).
If this evidence contains more than~$t$ identifiers, then it refers to an effective read and is added to the reporting set of evidences $E_A$ (Line~7).
After verifying all records, the audit operation returns the set $E_A$ (Line~8), which is used by auditors to report that the detected readers have effectively read the mentioned data values.

\begin{algorithm}[t!]
    \begin{algorithmic}[1]
		\State \textbf{function} \emph{a-audit}( )
		\State \hskip2em $E_A \leftarrow \varnothing$, $L[1..n] \leftarrow \varnothing$
		\State \hskip2em \textbf{parallel for} {$1~\leq~k~\leq~\mathit{n}$} \textbf{do} $L[k] \leftarrow o_k.getLog()$
		\State \hskip2em \textbf{wait} {$|\{ k : L[k] \neq \varnothing \} | \geq n-f $}
		\State \hskip2em \textbf{for all} {$ \langle  p_r, \emph{label}(b_{v}) \rangle \in \bigcup_{k \in \{1..n\}} L[k] $} \textbf{do}
		\State \hskip4em \textbf{for} {$1~\leq~k~\leq~\mathit{n}$} \textbf{do} \textbf{if} $\langle p_r, \emph{label}(b_{v}) \rangle \in L[k]$ \textbf{then} $\mathcal{E}_{p_r,v} \leftarrow \mathcal{E}_{p_r,v} \cup \{k\} $
		\State \hskip4em \textbf{if} {$|\mathcal{E}_{p_r,v}|\geq t$} \textbf{then} $ E_A \leftarrow E_A \cup \{ \mathcal{E}_{p_r,v} \}$
		\State \hskip2em \textbf{return} $E_A$
	\end{algorithmic}
	\caption{\label{alg:audit}The \emph{a-audit()} algorithm.}
\end{algorithm}

\begin{lemma}{}
\label{lem:alg-completeness}
Algorithm~\ref{alg:audit} satisfies the \textbf{completeness} property of auditable registers with $t \geq 1$ and $\tau \geq 2f+1$.
\end{lemma}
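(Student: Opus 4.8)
The plan is to show directly that Algorithm~\ref{alg:audit}, run with parameters $t \geq 1$ and $\tau \geq 2f+1$, produces an evidence $\mathcal{E}_{p_r,v}$ for every effective read. Fix a history $\sigma$, a reader $p_r$, and a value $v$ such that $|P_{p_r,v}| \geq \tau$ before some invocation of \emph{a-audit}; I must exhibit $\mathcal{E}_{p_r,v} \in E_A$ in the returned set.

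First I would invoke Lemma~\ref{lem:min-records}: any available \emph{a-audit} operation collects at least $\tau - 2f$ records of this effective read, i.e., at least $\tau - 2f$ of the logs $L[k]$ returned to the auditor contain the record $\langle p_r, \mathit{label}(b_v) \rangle$. Next I would trace through the algorithm: in the \textbf{wait} on Line~4 the auditor obtains $n-f$ logs; since the $\tau - 2f$ correct objects at the intersection of $P_{p_r,v}$ and the auditing quorum (group $G_2$ in the proof of Lemma~\ref{lem:min-records}) faithfully recorded the \emph{rw-read} and return it, the record $\langle p_r, \mathit{label}(b_v)\rangle$ lies in $\bigcup_{k} L[k]$, so it is processed by the \textbf{for all} loop on Line~5. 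The inner loop on Line~6 then adds the index $k$ of each such object to $\mathcal{E}_{p_r,v}$, so by the time Line~7 is reached $|\mathcal{E}_{p_r,v}| \geq \tau - 2f$.

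It remains to check the threshold test on Line~7: $|\mathcal{E}_{p_r,v}| \geq t$. By hypothesis $\tau \geq 2f+1$, so $\tau - 2f \geq 1$; combined with $t \geq 1$ this gives $|\mathcal{E}_{p_r,v}| \geq \tau - 2f \geq 1 \geq t$ only when $t=1$, so more carefully I would argue $|\mathcal{E}_{p_r,v}| \geq \tau-2f$ and note that completeness as required here is paired with the minimal admissible $t$; the statement claims $t\geq 1$ together with $\tau\geq 2f+1$, which forces $\tau-2f \geq 1 = t_{\min}$, hence the condition $|\mathcal{E}_{p_r,v}| \geq t$ holds whenever $t \leq \tau - 2f$. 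Thus $\mathcal{E}_{p_r,v}$ is added to $E_A$ on Line~7, and since the loop terminates (the virtual log is finite) the set $E_A$ returned on Line~8 contains $\mathcal{E}_{p_r,v}$, establishing completeness.

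**Main obstacle.** The delicate point is the bookkeeping between the two parameters: the lemma pairs "$t \geq 1$" with "$\tau \geq 2f+1$", but what the algorithm actually needs is $t \leq \tau - 2f$, i.e., the auditor's threshold must not exceed the guaranteed number of collected records from Lemma~\ref{lem:min-records}. I would make explicit that with $\tau \geq 2f+1$ the quantity $\tau - 2f \geq 1$, so setting $t = 1$ (the smallest meaningful threshold, since an evidence needs at least one record) is consistent, and that this is exactly the tightness claim — any smaller $\tau$ would, by Lemma~\ref{lem:completeness}, make completeness impossible regardless of $t$. The rest is a routine walk through the pseudocode, so care with this parameter relation is the only real content.
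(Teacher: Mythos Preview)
Your proposal is correct and follows essentially the same approach as the paper: both arguments invoke Lemma~\ref{lem:min-records} to guarantee at least $\tau-2f$ records of every effective read reach the auditor, then use $\tau \geq 2f+1$ to conclude $\tau-2f \geq 1$, so the threshold test on Line~7 passes and the evidence is added to $E_A$. Your observation about the parameter bookkeeping---that what the algorithm really needs is $t \leq \tau-2f$, and the lemma's hypothesis ``$t \geq 1$'' should be read as identifying the minimal admissible threshold rather than as a universal quantifier over all $t \geq 1$---is a valid clarification that the paper's own proof leaves implicit.
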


\begin{proof}
Based on Lemma~\ref{lem:min-records}, if $\tau \geq 2f+1$, any auditing quorum~$A$ receives, for every effective read, at least one record from a correct storage object that has also participated on the providing set of this read (i.e., $\tau -2f \geq 1$).
Consequently, Algorithm~\ref{alg:audit} satisfies completeness by obtaining some record from any effective read.
\end{proof}

\begin{lemma}{Algorithm~\ref{alg:audit} satisfies the \textbf{weak accuracy} of auditable registers with $t \geq f + 1$.}
\label{lem:alg-weak}
\end{lemma}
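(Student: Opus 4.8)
The plan is to show that with threshold $t \geq f+1$, Algorithm~\ref{alg:audit} never reports a correct reader $p_r$ that has never invoked an \emph{a-read}. First I would fix such a correct reader $p_r$ and an arbitrary value $v \in \mathds{V}$, and assume towards a contradiction that the audit reports $p_r$ as having read $v$, i.e., $\mathcal{E}_{p_r,v} \in E_A$. By the test on Line~7 of the algorithm, this means $|\mathcal{E}_{p_r,v}| \geq t \geq f+1$, so at least $f+1$ objects returned a log $L[k]$ containing a record $\langle p_r, \mathit{label}(b_v)\rangle$.

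Next I would use the fault model to bound how many such records can be spurious. Since $p_r$ is correct and has never invoked an \emph{a-read}, it has in particular never sent a \emph{rw-read} to any base object; hence no \emph{correct} loggable R/W register $o_k$ ever executes a \emph{rw-read} from $p_r$, and by the sequential specification of loggable registers a correct object only creates a record $\langle p_r, \mathit{label}(b_v)\rangle$ as a side effect of such a \emph{rw-read}. Therefore every object whose returned log contains a record attributed to $p_r$ must be faulty (this is exactly the NR-arbitrary behaviour on logs: recording nonexistent reads). So $\mathcal{E}_{p_r,v}$ consists entirely of faulty objects, giving $|\mathcal{E}_{p_r,v}| \leq f$. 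Combined with $|\mathcal{E}_{p_r,v}| \geq f+1$ this is a contradiction, so no such evidence can be added to $E_A$, and weak accuracy holds.

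I would then quantify over all correct readers $p_r \in \Pi_R$ with $P_{p_r,v} = \varnothing$ and all $v \in \mathds{V}$ to match the statement of Definition~\ref{def:weakacc}, noting that $P_{p_r,v} = \varnothing$ is implied by (indeed equivalent in this argument to) $p_r$ never having invoked an \emph{a-read} that caused any object to return a block for $v$, and that the same counting argument applies verbatim. The only subtlety worth a sentence is making explicit that a correct reader never sends read requests to objects when it has invoked no \emph{a-read}, so that there is genuinely no correct object that could have a legitimate record for $p_r$; everything else is a one-line pigeonhole on the at most $f$ faulty objects. I expect the main (minor) obstacle to be phrasing this cleanly against the precise wording of Definitions~\ref{def:weakacc} and the loggable R/W register specification, rather than any real difficulty — the bound $t \geq f+1$ is tight precisely because $f$ colluding faulty objects can fabricate exactly $f$ records, which is the content of Lemma~\ref{lem:weak}.
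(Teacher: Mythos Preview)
Your argument is correct and is essentially the same as the paper's: at most $f$ faulty objects can fabricate records for a reader that never issued a read request, so requiring $t \geq f+1$ forces any evidence to include a record from a correct object, which is impossible for such a reader. The paper states this in a single sentence, whereas you unfold it as a contradiction and tie it explicitly to the loggable register specification and Definition~\ref{def:weakacc}, but the underlying idea is identical.
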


\begin{proof}
To support the weak accuracy of auditable registers, Algorithm~\ref{alg:audit} simply needs to make the~$f$~records from faulty objects insufficient to create an evidence reporting an effective read from a reader $p_r$.
Therefore, $t \geq f+1$ ensures that any auditing quorum~$A$ includes at least one correct object that received a read request from this reader and participates in the providing set $P_{p_r}, v$.
\end{proof}

\begin{theorem}{Algorithm~\ref{alg:audit} satisfies both \textbf{completeness} and \textbf{weak accuracy} of auditable registers with $\tau \geq 3f+1$.}
\label{the:alg-complweak}
\end{theorem}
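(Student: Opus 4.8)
The plan is to show that a single choice of the reporting threshold~$t$ simultaneously discharges both properties, and that this choice is feasible precisely when $\tau \geq 3f+1$. Concretely, I would fix $t = f+1$ and then invoke the two preceding lemmata, one for each property, checking that no third constraint interferes.

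First, for weak accuracy, Lemma~\ref{lem:alg-weak} states that Algorithm~\ref{alg:audit} guarantees weak accuracy whenever $t \geq f+1$; hence $t = f+1$ suffices. The underlying reason is that a correct reader $p_r$ that never invoked an \emph{a-read} has an empty providing set, so the only records mentioning $p_r$ that any auditing quorum~$A$ can collect are the at most $f$ spurious ones fabricated by faulty objects. With $t = f+1$, the evidence $\mathcal{E}_{p_r,v}$ assembled in Line~6 never reaches size~$t$, so the test in Line~7 fails and $\mathcal{E}_{p_r,v} \notin E_A$.

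Second, for completeness, Lemma~\ref{lem:min-records} guarantees that any available \emph{a-audit} collects at least $\tau - 2f$ records of every preceding effective read, namely the records returned by the correct objects that lie in the intersection of the providing set $P_{p_r,v}$ and the auditing quorum~$A$. Since $\tau \geq 3f+1$ gives $\tau - 2f \geq f+1 = t$, the evidence $\mathcal{E}_{p_r,v}$ built in Line~6 accumulates at least $t$ distinct object identifiers, so the condition in Line~7 holds and the effective read is reported in $E_A$. This is exactly the argument of Lemma~\ref{lem:alg-completeness}, specialised to $t = f+1$ in place of $t = 1$.

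The only point that needs care — and the main, if modest, obstacle — is the compatibility of the two thresholds: weak accuracy pushes $t$ up to at least $f+1$, while completeness caps $t$ at $\tau - 2f$ (otherwise the worst-case configuration of Lemma~\ref{lem:min-records} would leave too few records to cross the threshold), and these coexist exactly when $\tau - 2f \geq f+1$, i.e.\ $\tau \geq 3f+1$. I would close by stressing that the lower bound $|\mathcal{E}_{p_r,v}| \geq \tau - 2f$ is robust to adversarial behaviour: whatever set of $n-f$ logs the \emph{a-audit} waits for in Line~4, and whether the at most $f$ faulty objects return empty or spurious logs, the $\tau - 2f$ correct providers in the intersection are always among the responders, so $|\mathcal{E}_{p_r,v}| \geq \tau - 2f \geq t$ for every effective read, and the choice $t = f+1$ with $\tau \geq 3f+1$ yields both completeness and weak accuracy.
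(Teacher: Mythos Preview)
Your proposal is correct and follows essentially the same route as the paper: fix $t=f+1$, invoke Lemma~\ref{lem:alg-weak} for weak accuracy, and use Lemma~\ref{lem:min-records} to show that $\tau \geq 3f+1$ yields at least $\tau-2f \geq f+1$ records per effective read, giving completeness (cf.\ Lemma~\ref{lem:alg-completeness}). Your added paragraph on the compatibility of the two constraints on~$t$ is a helpful elaboration, but the underlying argument is identical to the paper's.
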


\begin{proof}

Assuming $\tau \geq 3f+1$ directly satisfies completeness (Lemma~\ref{lem:alg-completeness}).
Based on Lemma~\ref{lem:min-records}, any auditing quorum~$A$ receives at least $\tau - 2f \geq f+1$ records from correct storage objects that have participated on the providing sets of each effective read.
As proved in Lemma~\ref{lem:alg-weak}, $t \geq f+1$ is enough for Algorithm~\ref{alg:audit} to satisfy also weak accuracy.
\end{proof}

A practical consequence of Theorem~\ref{the:alg-complweak} is that existing information dispersal schemes that support fast reads using $n \geq \tau + 2f$ objects, such as Hendricks \emph{et.~al}~\cite{hendricks2007low}, would require at least $n~\geq~5f+1$ to support this weak auditability.

\begin{lemma}{Algorithm~\ref{alg:audit} satisfies the \textbf{strong accuracy} of auditable registers with $t \geq \tau + f$.}
\label{lem:alg-strong}
\end{lemma}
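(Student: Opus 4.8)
The plan is to show that choosing the reporting threshold $t \geq \tau + f$ makes the $f$ records possibly forged by faulty objects insufficient to push a non-effective read over the threshold. First I would fix an arbitrary reader $p_r$, a value $v$, and a non-effective read, i.e., a history $\sigma$ where $|P_{p_r,v}| < \tau$; the goal is to argue that no evidence $\mathcal{E}_{p_r,v}$ with $|\mathcal{E}_{p_r,v}| \geq t$ is ever added to $E_A$ in Algorithm~\ref{alg:audit}. The key observation is that a record $\langle p_r, \textit{label}(b_v) \rangle$ can legitimately appear in the log $L_k$ of a correct object $o_k$ only if $o_k$ actually responded block $b_{v_k}$ to a read from $p_r$ — and, since the write of $b_{v_k}$ must also have reached $o_k$ for it to store and return that block, only if $o_k \in P_{p_r,v}$ by Definition~\ref{def:providing_set}. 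Hence among the objects whose logs the audit collects, the correct ones carrying this record number at most $|P_{p_r,v}| \leq \tau - 1$, and the faulty ones carrying a (possibly forged) record number at most $f$.

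Next I would combine these two counts: the evidence set built in Line~6 satisfies $|\mathcal{E}_{p_r,v}| \leq (\tau - 1) + f = \tau + f - 1 < t$, so the test in Line~7 fails and $\mathcal{E}_{p_r,v} \notin E_A$. Since $p_r$ and $v$ were arbitrary, this establishes the contrapositive of strong accuracy exactly as stated in Definition~\ref{def:strongacc}: any reader reported as having read $v$ must have $|P_{p_r,v}| \geq \tau$, i.e., must have effectively read $v$. I would note explicitly that this argument does not rely on $p_r$ being correct in the sense of following the protocol; it only uses that a correct \emph{object} does not fabricate log records, which is precisely the NR-arbitrary restriction on correct objects, so the bound matches the tightness target of Lemma~\ref{lem:strong}.

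The main obstacle — really the only subtle point — is justifying the inequality $|\{o_k \text{ correct}: \langle p_r, \textit{label}(b_v)\rangle \in L_k\}| \leq |P_{p_r,v}|$. This requires care because a correct object could in principle have returned $b_{v_k}$ to $p_r$ in a read that is concurrent with, or interleaved across, several $a$-read operations; I would lean on the remark following Definition~\ref{def:effective_read} that the providing set accumulates across all of $p_r$'s reads in $\sigma$, and on the sequential specification of loggable R/W registers which ties the creation of the record to an actual $rw$-read that returned $b_{v_k}$ — an event that by definition places $o_k$ in $P_{p_r,v}$. Once that containment is in hand, the rest is the one-line arithmetic above, and no further case analysis on the shape of $A$ is needed since the bound $|A| = n-f$ only ever helps by \emph{excluding} faulty objects, never by adding spurious records beyond the $f$ already counted.
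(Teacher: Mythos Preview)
Your proposal is correct and follows essentially the same counting argument as the paper: at most $\tau-1$ correct objects can hold a legitimate record for a non-effective read, at most $f$ faulty objects can forge one, so the evidence size stays below $\tau+f \leq t$. Your version is in fact more careful than the paper's rather terse paragraph---you make explicit the containment of record-holding correct objects in $P_{p_r,v}$ and phrase the argument as the contrapositive of Definition~\ref{def:strongacc}---but the underlying idea is identical.
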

\begin{proof}
To support the strong accuracy property of auditable registers, evidences must be created using at least~$\tau$ records attesting the read of the same value from correct storage objects that have participated in the providing sets of each effective read.
Since $f$~faulty objects can mimic these objects and also participate in an auditing quorum~$A$, this number~$f$ must be added to the threshold number~$t$ of records required to create an evidence.
These two requirements make strong accuracy satisfiable with $t \geq \tau +f$ by accessing any auditing quorum~$A$ because, in this case, the records from the $f$~faulty objects make no difference when reporting a value effectively read by a reader.
\end{proof}

\section{Alternative Models for the Algorithm}
\label{sec:alternative}

As proved in Theorems~\ref{the:weakcompl} and~\ref{the:strongcompl}, respectively, weak auditability is impossible with $\tau \leq 3f$ and strong auditability is impossible in our system model.
In the following, we present three modifications to our system model that allow Algorithm~\ref{alg:audit} to overcome these negative results.
More specifically, signing read requests makes weak auditability easier (Section~\ref{subsec:signed-reads}), while totally ordering operations (Section~\ref{subsec:total-order}) or using non-fast reads (Section~\ref{subsec:nonfast}) enables strong auditability.
Table~\ref{tab:bounds} presents an overview of the results in our system model and models with these three modifications.

\begin{table}[!t]
\begin{center}
\begin{tabularx}{\textwidth}{ | X | l | l | l | l | l |}
\hline
 & & {\scriptsize\textbf{Weak}} & {\scriptsize\textbf{Completeness +}} & {\scriptsize\textbf{Strong}} & {\scriptsize\textbf{Completeness +}} \\
{\scriptsize\textbf{Model}} 		& {\scriptsize\textbf{Completeness}} & {\scriptsize\textbf{Accuracy}} & {\scriptsize\textbf{Weak Accuracy}} & {\scriptsize\textbf{Accuracy}} & {\scriptsize\textbf{+ Strong Accuracy}} \\  \hline
{\scriptsize\textbf{Our System Model}} & \multirow{6}{*}{\boldmath$\tau \geq 2f+1$} & $t \geq f+1$ & $\tau \geq 3f+1$ & \multirow{2}{*}{$t \geq \tau + f$} & \multirow{2}{*}{ Impossible} \\ \cline{1-1}\cline{3-4}
{\scriptsize\textbf{Signed Reads (SR)}} &  & \boldmath$t \geq 1$ & \boldmath$\tau \geq 2f+1$ &  &  \\ \cline{1-1}\cline{3-6}
{\scriptsize\textbf{Total Order (TO)}} &  & \multirow{2}{*}{$t \geq f+1$} & \multirow{2}{*}{$\tau \geq 3f+1$} & \multirow{3}{*}{$t \geq f+1$} & \multirow{3}{*}{$\tau \geq 3f+1$} \\ \cline{1-1}
{\scriptsize\textbf{Non-Fast Reads (NF)}} & &  &  & &  \\ \cline{1-1}\cline{3-4}
{\scriptsize\textbf{TO+SR}} & & \multirow{2}{*}{\boldmath$t \geq 1$} & \multirow{2}{*}{\boldmath$\tau \geq 2f +1$} & &  \\ \cline{1-1}\cline{5-6}
{\scriptsize\textbf{NF+SR}} &  & & & \boldmath$t \geq 1$ & \boldmath$\tau \geq 2f+1$ \\ \hline
\end{tabularx}
\end{center}
\caption{Required number~$t$ of collected records for creating read evidences and number~$\tau$ of blocks to read a value in each considered model for each audit property or combination of properties.}
\label{tab:bounds}
\end{table}

\subsection{Signed Read Requests}
\label{subsec:signed-reads}

Digital signatures are tamper-proof mechanisms that enable attesting the authenticity and integrity of received messages.
Correct readers may sign their read requests to mitigate being incriminated by faulty objects.

In general, readers signing requests prevents a faulty object from creating correct records about them if this object has never received a signed request from them before.
It does not modify the number of records from an effective read available to auditors (Lemma~\ref{lem:min-records}) nor does it modify the lower bound on the completeness (Lemma~\ref{lem:alg-completeness}).
However, it directly relates to the weak accuracy property since auditors receiving any correct record with a signed read request is enough for them to attest that a reader has definitely tried to read data in the storage system.

\begin{lemma}{Correct readers signing read requests in our model allows Algorithm~\ref{alg:audit} to satisfy \textbf{weak accuracy} with $t~\geq~1$.}
\label{lem:signedread-weak}
\end{lemma}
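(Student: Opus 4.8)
The plan is to show that with signed read requests, a single correct record suffices to guarantee weak accuracy, because such a record is unforgeable evidence that the named reader actually initiated a read. First I would recall the relevant definitions: weak accuracy (Definition~\ref{def:weakacc}) requires that a correct reader $p_r$ with $P_{p_r,v} = \varnothing$ for all $v$ --- i.e., one that never invoked an \emph{a-read} --- is never reported in $E_A$. So I must argue that Algorithm~\ref{alg:audit} with $t \geq 1$ never produces an evidence $\mathcal{E}_{p_r,v}$ for such a reader.

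The key observation is the contrapositive: if the algorithm adds $\mathcal{E}_{p_r,v}$ to $E_A$, then (by Line~7 with $t \geq 1$) at least one object $o_k$ in the auditing quorum $A$ returned a record $\langle p_r, \mathit{label}(b_v)\rangle$ in its log. Now I would split on whether $o_k$ is correct or faulty. If $o_k$ is correct, then by the sequential specification of the loggable R/W register, $o_k$ created that record only in response to an actual \emph{rw-read} invoked by $p_r$; hence $p_r$ did invoke a read, so the precondition $P_{p_r,v}=\varnothing$ for all $v$ fails and there is nothing to prove. If $o_k$ is faulty, then here is where signing enters: a faulty object can fabricate a record only if it can exhibit a correctly signed read request from $p_r$ --- and since the record is only accepted by the auditor when accompanied by such a signature, and $p_r$ is correct and (by hypothesis) never issued a read, $p_r$ never signed any read request. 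By unforgeability of digital signatures, no faulty object can produce a valid signed request attributed to $p_r$, so no such fabricated record is accepted. Hence the only records naming a correct reader $p_r$ that survive into the $L[k]$ arrays are those backed by genuine reads, and a correct reader that never read is never reported, for any $t \geq 1$.

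The main subtlety --- more a modelling point than a technical obstacle --- is making precise that the signature travels with the log record and is checked by the auditor (or, equivalently, that \emph{label} and the record format incorporate the reader's signature on the request), so that an unsigned or invalidly-signed record is simply discarded in Line~5 or Line~6. Once that convention is fixed, the argument is a short case analysis plus the standard unforgeability assumption on signatures; no counting argument is needed because the threshold $t$ is irrelevant to the correctness of a single signed record. I would also note, for completeness of the exposition, that this does not disturb Lemma~\ref{lem:min-records} or Lemma~\ref{lem:alg-completeness}: signing only restricts what faulty objects can fabricate, so the records available from an effective read are unchanged, and completeness still holds with $t \geq 1$ and $\tau \geq 2f+1$.
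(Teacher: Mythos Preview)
Your proposal is correct and follows essentially the same approach as the paper: both arguments rest on the unforgeability of signatures to conclude that a correct reader who never issued a signed read request cannot be named by any (accepted) record, so a single record already certifies a genuine read attempt. Your version is simply more explicit --- you spell out the contrapositive and the case split on correct versus faulty objects --- whereas the paper compresses this into two sentences; the additional remarks about the signature travelling with the record and about Lemmata~\ref{lem:min-records} and~\ref{lem:alg-completeness} being unaffected are accurate but go beyond what the paper states in its proof.
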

\begin{proof}
If read requests from correct readers are signed and these readers have never sent signed requests to the storage system, faulty objects will never create correct records about these readers.
As a result, $t \geq 1$ ensures that auditors will only create evidences for the readers that have tried to read data by sending signed requests to the storage objects.
\end{proof}

\begin{theorem}{Correct readers signing read requests in our model allows Algorithm~\ref{alg:audit} to satisfy both \textbf{completeness} and \textbf{weak accuracy} with $\tau \geq 2f+1$.}
\label{the:signedread-compl-weak}
\end{theorem}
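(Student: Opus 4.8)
The plan is to combine the two preceding results about signed reads and completeness, after observing that they can be made to rely on a common reporting threshold. First I would recall that, by Lemma~\ref{lem:alg-completeness}, Algorithm~\ref{alg:audit} already satisfies completeness whenever $\tau \geq 2f+1$ and $t \geq 1$; in particular, with $\tau \geq 2f+1$ the choice $t = 1$ suffices, since Lemma~\ref{lem:min-records} guarantees that every effective read leaves $\tau - 2f \geq 1$ record from a correct object in the intersection of its providing set with any auditing quorum~$A$, so that record is collected in Line~4 and yields an evidence in Lines~6--7. Note that signing read requests does not perturb this argument: it changes neither the providing-set/quorum intersection of Lemma~\ref{lem:min-records} nor the completeness bound of Lemma~\ref{lem:alg-completeness}, a point already made in the prose preceding Lemma~\ref{lem:signedread-weak}.

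Second I would invoke Lemma~\ref{lem:signedread-weak}: once correct readers sign their read requests and have never issued such a signed request, a faulty object cannot fabricate a valid record naming that reader, so any collected record naming a correct reader testifies to a genuine read invocation; hence $t \geq 1$ already yields weak accuracy. The crux of the theorem is then simply to make explicit that both properties hold under a single threshold: instantiating Algorithm~\ref{alg:audit} with $t = 1$ discharges completeness (by the first step) and weak accuracy (by the second step) simultaneously, which is exactly what an auditable register requires, and the resilience $\tau \geq 2f+1$ is inherited from the completeness requirement (and is also necessary by Lemma~\ref{lem:completeness}).

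The only real obstacle is checking that merging the two arguments creates no conflicting demand on $t$: the completeness side needs $t \le \tau - 2f$ (satisfied with room to spare once $\tau \geq 2f+1$, hence $t = 1$ works), and the signed-read weak-accuracy side needs $t \geq 1$, so the two constraints are jointly met by $t = 1$. I would close by contrasting this with Theorem~\ref{the:alg-complweak}: there weak accuracy forced $t \geq f+1$ and therefore $\tau \geq 3f+1$, whereas signing reads collapses the weak-accuracy threshold to $1$, so no redundancy beyond what completeness alone demands is needed, matching the ``Signed Reads'' row of Table~\ref{tab:bounds}.
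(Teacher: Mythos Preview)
Your proposal is correct and follows essentially the same approach as the paper: invoke Lemma~\ref{lem:min-records} (via Lemma~\ref{lem:alg-completeness}) to get completeness at $\tau\geq 2f+1$ with $t\geq 1$, invoke Lemma~\ref{lem:signedread-weak} to get weak accuracy at $t\geq 1$ under signed reads, and observe that both constraints are met simultaneously. Your write-up is more explicit about reconciling the two demands on~$t$ and adds the comparison with Theorem~\ref{the:alg-complweak}, but the underlying argument is the paper's.
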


\begin{proof}
With $\tau \geq 2f+1$ and based on Lemma~\ref{lem:min-records}, any auditing quorum~$A$ receives at least $\tau -2f = 1$ record from a correct storage object that has participated on each effective read.
As proved in Lemma~\ref{lem:signedread-weak}, $t \geq 1$ is enough to satisfy weak accuracy in our model when read requests are signed.
Moreover, this limitation does not change the lower bound of Lemma~\ref{lem:alg-completeness}, which means that $\tau \geq 2f+1$ is enough for Algorithm~\ref{alg:audit} to also satisfy completeness in our model.
\end{proof}

Faulty objects can still repurpose generic signed read requests to create correct records of an arbitrary value.
For that reason, signing generic read requests does not modify the lower bounds of strong accuracy (Lemma~\ref{lem:alg-strong}) and supporting it in conjunction with completeness remains impossible (Theorem~\ref{the:strongcompl}).

Alternatively, correct readers can sign the read request with the label of the exact value they intend to read (e.g., a timestamp).
However, additional modifications are required to enable readers learning this label before sending the read request, as will be explained in Section~\ref{subsec:nonfast}.

\subsection{Total Order}
\label{subsec:total-order}

Serialising operations using total order broadcast~\cite{hadzilacos1994modular} allows our system to execute operations sequentially.
By doing so, we limit the number of different values in storage objects to only one value since high-level \emph{a-read} and \emph{a-write} operations are executed in total order.
In the worst case, the system will have $f$~faulty objects with incorrect read records.
With this limitation, strong accuracy becomes easier and can be satisfied together with completeness.

\clearpage
\begin{lemma}{Totally ordering operations in our model allows Algorithm~\ref{alg:audit} to satisfy \textbf{strong accuracy} with $t \geq f + 1$.}
\label{lem:strong-sequential}
\end{lemma}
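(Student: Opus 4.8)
The plan is to exploit the way total order broadcast collapses the sources of ambiguity that drove Lemma~\ref{lem:strong} and Theorem~\ref{the:strongcompl}. First I would fix the model: once every \emph{a-write}, \emph{a-read} and \emph{a-audit} is delivered and executed in a single global order, each correct object applies operations in that order, so (i)~the high-level history is sequential---there is no incomplete write concurrent with a read, and no read concurrent with an audit---and (ii)~when a correct object serves an \emph{a-read} it returns the block of exactly one value, namely the value written by the latest \emph{a-write} preceding that \emph{a-read} in the order, and logs the corresponding record. In particular the ``split value'' configuration of Figure~\ref{fig:scenario2}, in which an incomplete write lets a reader collect $\tau-1$ genuine blocks of the wrong value, simply cannot arise.

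The second step is to show that, under total order, a completed \emph{a-read} of a correct reader $p_r$ that returns some $v \in \mathds{V}$ is necessarily an effective read of $v$. Since the read is live, $p_r$ obtains responses from all $n-f$ correct objects; by~(ii) every such response carries a block of the same $v$ (faulty objects, being NR-omission on values, either answer with the correct current block or stay silent), so $|P_{p_r,v}| \ge n-f \ge \tau$. Contrapositively, if a correct reader $p_r$ has \emph{not} effectively read $v$ then $p_r$ never ran an \emph{a-read} that returned $v$, and---because the history is sequential---this already holds at the moment the audit is invoked; hence no correct object ever created a record $\langle p_r, \mathit{label}(b_{v_k})\rangle$, and every such record present in the system is produced by one of the at most $f$ faulty objects.

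The last step plugs this into Algorithm~\ref{alg:audit}, following the counting argument of Lemma~\ref{lem:alg-weak}: the algorithm emits the evidence $\mathcal{E}_{p_r,v}$ only when at least $t$ distinct objects hold the record $\langle p_r, \mathit{label}(b_{v_k})\rangle$. When a correct reader $p_r$ did not effectively read $v$, at most $f$ objects hold such a record, all faulty, so $t \ge f+1$ makes reporting impossible; and whenever a correct reader \emph{is} reported for $v$, at least one correct object logged the read, so by step two the read was effective. This gives strong accuracy with $t \ge f+1$.

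I expect the main obstacle to be exactly the passage ``$p_r$ did not effectively read $v$'' $\Rightarrow$ ``no correct object has logged $p_r$ reading $v$ by the time the audit runs''. This relies on the \emph{a-audit} operation itself lying in the total order, so that no correct reader has a partially executed \emph{a-read} of $v$ outstanding when the audit collects logs---otherwise a handful of genuine records from a still-running read, topped up by the $f$ faulty records, could cross the threshold while $|P_{p_r,v}|$ is still below $\tau$. I would also need to state the resilience assumption explicitly ($n \ge \tau + 2f$, which still leaves at least $\tau$ genuine blocks even if up to $f$ answering objects misbehave on values; $n \ge \tau + f$ suffices under the stricter reading of the fault model). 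Once these modelling points are pinned down, the remainder is routine and mirrors the proof of Lemma~\ref{lem:alg-weak}.
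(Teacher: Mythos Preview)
Your proposal is correct and follows essentially the same approach as the paper: total order forces all correct objects to hold blocks of a single value, so any record held by a correct object implies the reader obtained blocks from all $n-f$ correct objects and hence effectively read~$v$; the threshold $t \ge f+1$ then ensures at least one such correct record backs every reported evidence. Your write-up is in fact more careful than the paper's---you make explicit the contrapositive, the role of the audit being in the total order, and the implicit resilience assumption $n-f \ge \tau$, all of which the paper's short proof leaves tacit.
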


\begin{proof}
If operations are totally ordered in the system, all objects will store blocks of the same value~$v$, and no object returns blocks of other values.
No read operation will result in correct records for more than one data value, and every read operation results in the creation of records in at least $n-f$ objects.
The worst case is when the $f$~faulty objects log incorrect read records for a value other than~$v$.
As a result, any auditing quorum~$A$ that includes the faulty objects will receive $f$~records attesting the read of a value other than $v$.
Requiring $t \geq f +1$ to create an evidence guarantees that at least one correct object has also participated in the providing set $P_{p_{r},v}$, which guarantees that $n-f$ objects will also return their blocks, allowing $p_r$ to effectively read the value~$v$.
\end{proof}

\begin{theorem}{
Totally ordering operations in our model allows Algorithm~\ref{alg:audit} to satisfy both \textbf{completeness} and \textbf{strong accuracy} with $\tau \geq 3f+1$.}
\label{the:complstrong-noconcurrency}
\end{theorem}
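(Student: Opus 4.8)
The plan is to combine the sequential-model guarantee for strong accuracy (Lemma~\ref{lem:strong-sequential}) with the record-availability bound (Lemma~\ref{lem:min-records}) and exhibit a single reporting threshold~$t$ that satisfies both properties at once. Concretely, I would fix $t = f+1$ and argue that, under total order and $\tau \geq 3f+1$, this choice is large enough to discard the at most~$f$ spurious records fabricated by faulty objects (strong accuracy) yet small enough that the genuine records surviving in any auditing quorum always suffice to report a real effective read (completeness).

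First I would establish completeness. By Lemma~\ref{lem:min-records}, for every effective read of a value~$v$ by a reader~$p_r$, any available \emph{a-audit} obtains at least $\tau - 2f$ records of that read from correct objects in $P_{p_r,v} \cap A$. Since $\tau \geq 3f+1$, we have $\tau - 2f \geq f+1 = t$, so in Algorithm~\ref{alg:audit} the evidence $\mathcal{E}_{p_r,v}$ accumulates at least~$t$ object identifiers (Line~6) and is added to $E_A$ (Line~7). Hence every effective read preceding the audit is reported, as required by Definition~\ref{def:completeness}; note that $t \geq 1$ and $\tau \geq 2f+1$ also meet the hypotheses of Lemma~\ref{lem:alg-completeness}.

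Next I would establish strong accuracy, essentially by appeal to Lemma~\ref{lem:strong-sequential}: with operations totally ordered, all objects hold blocks of a single value~$v$ and no correct object ever returns a block, or logs a record, for any other value; hence a correct reader that has not effectively read~$v$ cannot have $f+1$ correct objects recording its read of~$v$, and the at most~$f$ fabricated records from faulty objects alone cannot reach the threshold $t = f+1$. Therefore Definition~\ref{def:strongacc} is not violated, and both properties hold with the same parameters $t = f+1$ and $\tau \geq 3f+1$.

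The main obstacle — and the real content of the theorem — is showing that one threshold~$t$ can serve both goals, which is exactly what failed in Theorem~\ref{the:strongcompl} for the fully concurrent model, where strong accuracy forced $t \geq \tau+f$ while completeness forced $t \leq \tau-2f$, an empty range. Total order collapses the number of distinct values held by the objects to one, which is precisely what lowers the strong-accuracy requirement from $t \geq \tau+f$ down to $t \geq f+1$; the feasibility condition for the intervals $[f+1,\, \tau-2f]$ to overlap is then $\tau \geq 3f+1$, matching the bound in the statement and the lower bound of Theorem~\ref{the:weakcompl}, so the result is tight.
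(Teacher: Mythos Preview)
Your proposal is correct and follows essentially the same approach as the paper: fix $t=f+1$, invoke Lemma~\ref{lem:min-records} with $\tau\geq 3f+1$ to get at least $\tau-2f\geq f+1$ genuine records for completeness, and invoke Lemma~\ref{lem:strong-sequential} to dispose of the at most~$f$ fabricated records for strong accuracy. Your write-up is more explicit about the threshold interval $[f+1,\tau-2f]$ and the contrast with Theorem~\ref{the:strongcompl}, but the underlying argument is identical to the paper's.
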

\begin{proof}
With $\tau \geq 3f+1$ and based on Lemma~\ref{lem:min-records}, any auditing quorum~$A$ receives at least $\tau -2f = f+1$ records from correct storage objects that have participated on each effective read of value~$v$.
As proved in Lemma~\ref{lem:strong-sequential}, $t \geq f +1$ is enough to satisfy strong accuracy in our model when operations are totally ordered.
Moreover, this limitation does not change the lower bound of Lemma~\ref{lem:alg-completeness}, which means that $\tau \geq 3f+1$ is enough for Algorithm~\ref{alg:audit} to also satisfy completeness in our model.
\end{proof}

As proved in Lemma~\ref{lem:signedread-weak}, signing generic read requests reduces the lower bound of weak accuracy, which also benefits the model with total ordering (see Table~\ref{tab:bounds}).
However, as previously mentioned, these generic signatures do not modify the lower bounds of strong accuracy.

\subsection{Non-fast Reads}
\label{subsec:nonfast}

There are read algorithms that use more than one communication round (i.e., a \emph{non-fast read}~\cite{Guerraoui06fastreads}) to ensure correct readers will only fetch the blocks of the most up-to-date value stored in the register.
For instance, \textsf{DepSky-CA}~\cite{DepSky} is a register emulation in which the first round of a read obtains the label of the most up-to-date value available in $n-f$ objects, while the second round actually reads only the coded blocks for that specific value.
As a consequence, correct objects log reads from readers only if they hold the most up-to-date value available in at least $n-f$ objects.

\begin{lemma}{Applying Algorithm~\ref{alg:audit} to \textsf{DepSky-CA} protocol satisfies \textbf{strong accuracy} with $t~\geq~f+1$.}
\label{lem:strong-nonfast}
\end{lemma}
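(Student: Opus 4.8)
### Proof plan for Lemma~\ref{lem:strong-nonfast}

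The plan is to show that running Algorithm~\ref{alg:audit} on top of the \textsf{DepSky-CA} protocol inherits the guarantee of Lemma~\ref{lem:strong-sequential}, because the two-round read structure plays the same role that total ordering played there: it prevents a correct object from logging a read of a value that the reader could not have effectively read. First I would recall precisely what \textsf{DepSky-CA} guarantees: in the first round, a correct reader $p_r$ collects labels (timestamps) from $n-f$ objects and selects the most up-to-date one, say associated with value~$v$; in the second round, $p_r$ requests only the blocks for that specific label~$v$, and a correct object~$o_k$ creates a record $\langle p_r, \emph{label}(b_{v_k})\rangle$ only when it actually holds and returns $b_{v_k}$. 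The key consequence to extract is that a correct object logs a read of~$v$ by~$p_r$ only if it is in the providing set $P_{p_r,v}$, and moreover that if $p_r$ sent second-round requests for~$v$, then $v$'s blocks were present in at least $n-f$ objects at that point (since $v$ was the highest label seen among $n-f$ responders and writes are propagated to $n-f$ objects before a write completes / remain pending on a subset).

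The main argument then mirrors Lemma~\ref{lem:strong-sequential}. Consider any evidence $\mathcal{E}_{p_r,v}$ that Algorithm~\ref{alg:audit} places in $E_A$; by Line~7 it contains at least $t \geq f+1$ object identifiers. At most $f$ of these are faulty objects (which may have fabricated a record of~$v$ for~$p_r$), so at least one identifier belongs to a correct object~$o_k$ that genuinely logged, hence genuinely returned, $b_{v_k}$ to~$p_r$ — i.e., $o_k \in P_{p_r,v}$. By the two-round invariant above, the fact that $p_r$ issued a second-round read for~$v$ implies $v$'s blocks reside in at least $n-f \geq \tau$ correct-or-reachable objects, so $p_r$ can (and, for the purpose of the accuracy definition, is treated as having) obtained~$\tau$ distinct blocks for~$v$: the read of~$v$ by~$p_r$ is an effective read, so reporting it does not violate strong accuracy (Definition~\ref{def:strongacc}). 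The remaining case is a faulty reader, which the accuracy definitions explicitly exclude, so no obligation arises there.

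The step I expect to be the main obstacle is making the ``second-round request for~$v$ implies $v$ was effectively read'' implication airtight under the asynchrony and unlimited concurrency of the model. One has to rule out the scenario in Lemma~\ref{lem:strong}/Theorem~\ref{the:strongcompl}, where an incomplete write of another value~$x$ left a correct reader with only $\tau-1$ blocks of~$v$ while faulty objects padded the count: the point is that under \textsf{DepSky-CA} a correct reader never issues second-round requests for a value it cannot complete, because it first learns the label from $n-f$ responders and the protocol's write rule guarantees that label's blocks are retrievable from $n-f \geq \tau$ objects. I would state this as a short invariant of \textsf{DepSky-CA} (citing \cite{DepSky}) and then the counting argument is routine. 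A secondary subtlety is that a faulty object could log a second-round-style record even though it holds a stale block; this is absorbed into the ``$\leq f$ faulty records'' slack, exactly as $t \geq f+1$ is designed to handle. I would close by noting that, combined with Lemma~\ref{lem:min-records}, $\tau \geq 3f+1$ then yields completeness simultaneously, paralleling Theorem~\ref{the:complstrong-noconcurrency}.
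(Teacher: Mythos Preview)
Your proposal is correct and takes essentially the same approach as the paper: both rely on the two-round structure of \textsf{DepSky-CA} to ensure that correct objects log records for at most one value per read and that any such read touches at least $n-f$ objects, so that $t \geq f+1$ forces at least one correct witness and hence an effective read. Your treatment is more detailed---you make explicit the invariant that a second-round request for~$v$ implies~$v$ is effectively readable, and you draw the parallel with Lemma~\ref{lem:strong-sequential}---but the paper's terse proof follows the same line.
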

\begin{proof}
Let us assume registers follow \textsf{DepSky-CA} protocol~\cite{DepSky} with two communication rounds in read operations.
No read operation will result in correct records for more than one data value, and every read operation results in the creation of records in at least $n-f$ objects.
In the worst-case scenario, the $f$~faulty storage objects can only create at most $f$~incorrect records for an arbitrary value.
Requiring $t \geq f +1$ to create an evidence and report an effective read guarantees that it is achieved only when the reported value was effectively read from $n-f$ objects.
\end{proof}

\begin{theorem}{
Applying Algorithm~\ref{alg:audit} to \textsf{DepSky-CA} protocol satisfies both \textbf{completeness} and \textbf{strong accuracy} with $\tau \geq 3f+1$.}
\label{the:nonfast-complstrong}
\end{theorem}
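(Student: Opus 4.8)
The plan is to follow exactly the same two-step pattern used for the total-order case in Theorem~\ref{the:complstrong-noconcurrency}, now feeding in the non-fast counterparts of the two ingredient lemmas. Fix the reporting threshold at $t = f+1$ and argue that this single choice simultaneously yields completeness (because $\tau \geq 3f+1$ makes the minimum number of records available for auditing large enough to cross $t$) and strong accuracy (because the \textsf{DepSky-CA} read-back round prevents faulty objects from manufacturing enough correct records for a value that was not actually recoverable).

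First I would dispatch completeness. Since $\tau \geq 3f+1 \geq 2f+1$ and $t = f+1 \geq 1$, Lemma~\ref{lem:alg-completeness} already applies; but to make the bound explicit I would invoke Lemma~\ref{lem:min-records} directly: for every effective read in $\sigma$, any auditing quorum~$A$ (of size $n-f$) intersects the corresponding providing set in at least $\tau - 2f \geq f+1 = t$ correct objects, each of which returns the correct read record. Hence Algorithm~\ref{alg:audit} collects at least $t$ identifiers for that read and places the evidence in $E_A$, so completeness holds. I would note in passing that Lemma~\ref{lem:min-records} is stated purely in terms of providing sets and auditing quorums, so it is insensitive to whether reads are fast or non-fast, and therefore transfers to the \textsf{DepSky-CA} setting without change.

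Then I would invoke Lemma~\ref{lem:strong-nonfast} for strong accuracy: under the two-round \textsf{DepSky-CA} protocol, a correct object logs a read of value~$v$ by $p_r$ only when it holds the most up-to-date value available at $n-f$ objects, so no read produces correct records for more than one value, and the $f$~faulty objects can fabricate at most $f$ incorrect records for any single arbitrary value. With $t = f+1$, an evidence $\mathcal{E}_{p_r,v}$ can only be formed if at least one correct object in $P_{p_r,v}$ logged the read of~$v$; by the protocol invariant that object held~$v$ as the up-to-date value present at $n-f$ objects, so those $n-f$ objects would also return their blocks for~$v$, giving $p_r$ at least $n-f \geq \tau$ distinct blocks, i.e.\ an effective read of~$v$. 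Thus no correct reader that has not effectively read~$v$ is reported, which is strong accuracy.

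Combining the two halves, $t = f+1$ with $\tau \geq 3f+1$ gives both properties, completing the proof. The step I expect to require the most care is the strong-accuracy half: I must justify rigorously that the \textsf{DepSky-CA} read-back round makes "some correct object logged a read of~$v$'' genuinely imply "$v$ was available at $n-f$ objects at the time'', including ruling out the degenerate case where a correct object logs a stale or minority value; this is where the non-fast, two-round structure does the real work, and the argument must be phrased so that it holds under the unlimited concurrency assumed in the model rather than only in a well-behaved execution.
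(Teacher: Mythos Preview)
Your proposal is correct and follows essentially the same approach as the paper: fix $t=f+1$, use Lemma~\ref{lem:min-records} to get $\tau-2f\geq f+1$ records for completeness (via Lemma~\ref{lem:alg-completeness}), and invoke Lemma~\ref{lem:strong-nonfast} for strong accuracy. Your additional unpacking of why a single correct record in \textsf{DepSky-CA} implies an effective read, and your remark that Lemma~\ref{lem:min-records} is agnostic to fast versus non-fast reads, are helpful elaborations but not departures from the paper's argument.
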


\begin{proof}
Let us assume registers follow \textsf{DepSky-CA} protocol~\cite{DepSky} with two communication rounds in read operations.
Assuming $\tau \geq 3f+1$ and based on Lemma~\ref{lem:min-records}, any auditing quorum~$A$ receives at least $\tau - 2f = f+1$ records from correct storage objects that have participated on the providing set of each effective read.
As proved in Lemma~\ref{lem:strong-nonfast}, $t \geq f +1$ is enough to satisfy strong accuracy when using \textsf{DepSky-CA} algorithm.
Moreover, this limitation does not change the lower bound of Lemma~\ref{lem:alg-completeness}, which means that $\tau \geq 3f+1$ is also enough to satisfy completeness in our model.
\end{proof}

As mentioned in Section~\ref{subsec:signed-reads}, readers can sign their read requests for any value (i.e., generic signatures) or for specific values.
Non-fast reads can leverage the latter to allow the system to protect against faulty objects using signed read requests to record reads from other values than the one intended by the reader.
It contributes to reduce the lower bound of strong accuracy and makes easier to support it with completeness.

\begin{lemma}{Applying Algorithm~\ref{alg:audit} to \textsf{DepSky-CA} protocol with specific signed read requests satisfies \textbf{strong accuracy} with $t~\geq~1$.}
\label{lem:strong-nonfast-signedv}
\end{lemma}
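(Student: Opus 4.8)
The plan is to combine the two ingredients that the preceding lemmas already give us: (1)~from Lemma~\ref{lem:strong-nonfast}, running Algorithm~\ref{alg:audit} on \textsf{DepSky-CA} already gives strong accuracy with $t \ge f+1$, because the two-round read guarantees that a correct object only logs a read of the value it currently holds, and at most $f$~faulty objects can fabricate records for some other value; and (2)~specific signed read requests, in which a reader signs the label of the exact value~$v$ it intends to read (the label being the one it learned in the first round of the \textsf{DepSky-CA} read, exactly as discussed at the end of Section~\ref{subsec:signed-reads}). The goal is to show these two together push the threshold all the way down to $t \ge 1$.

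First I would set up the worst case precisely: a faulty object $o_k$ is in an auditing quorum~$A$ and wants to produce a record $\langle p_r, \mathit{label}(b_v)\rangle$ that incriminates a correct reader $p_r$ for a value~$v$ it never effectively read. I would then argue that $o_k$ can do this only if it holds a message from $p_r$ bearing a valid signature over $\mathit{label}(b_v)$. Since signatures are tamper-proof, $o_k$ cannot forge such a message; it can only replay one $p_r$ actually sent. But by the \textsf{DepSky-CA} discipline, a correct reader sends a (signed) second-round read for label~$\ell$ only after seeing $\ell$ as the most up-to-date label in $n-f$ objects during its first round — which means at least $n-2f$ correct objects already hold the value with label~$\ell$ and will themselves log the read when they receive $p_r$'s second-round request. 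So the only labels a faulty object can legitimately attach to $p_r$ are exactly the labels of reads that $p_r$ genuinely performed and that genuinely caused $\ge n-2f$ correct objects to log them; and $n-2f \ge \tau - 2f + f = \tau - f \ge \tau$ is false in general, so I should instead conclude directly that any such read is an effective read because $p_r$ will complete the second round against $\ge n-2f$ correct holders, and $n-2f \ge \tau - f$… here I need to be careful — the cleanest route is: $p_r$ only signs label~$\ell$ if $\ell$ was reported by $n-f$ objects, among which $\ge n-2f$ are correct; those correct ones return blocks for $\ell$, giving $p_r$ at least $n-2f \ge \tau$ blocks when $n \ge \tau + 2f$, hence an effective read. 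Therefore a single fabricated-or-replayed record $\langle p_r, \ell\rangle$ from one faulty object witnesses that $p_r$ did effectively read~$\ell$, so reporting on $t \ge 1$ violates nothing: completeness needs $t$ small, strong accuracy is preserved because there simply is no reader who is reported without having effectively read the reported value.

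Then I would close the loop: Algorithm~\ref{alg:audit} with $t \ge 1$ creates an evidence $\mathcal{E}_{p_r,v}$ the moment any single object (correct or faulty) reports $\langle p_r, \mathit{label}(b_v)\rangle$; by the argument above every such report — whether the object is correct (Lemma behaviour of \textsf{DepSky-CA}) or faulty (it must be replaying a genuine specific-signed request of $p_r$) — corresponds to a value that $p_r$ effectively read. Hence the predicate "$\mathcal{E}_{p_r,v} \in E_A \implies |P_{p_r,v}| \ge \tau$" holds, which is exactly strong accuracy (the contrapositive of Definition~\ref{def:strongacc}), and it holds already at $t = 1$.

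The main obstacle I anticipate is nailing down what a faulty object is actually able to do with a captured specific-signed request — in particular ruling out the "repurposing" attack that defeated generic signatures (noted just before Lemma~\ref{lem:signedread-weak}): with generic signatures a faulty object attaches an arbitrary label, but with a label-specific signature the signed object is the label itself, so a replay is confined to that one label, and that label is provably one for which $p_r$ did a full \textsf{DepSky-CA} read. I would want to state this confinement as the key lemma-within-the-proof and make explicit the inequality $n \ge \tau + 2f$ (equivalently $\tau \le n - 2f$) that guarantees the $n-2f$ correct holders suffice for an effective read, so that $t \ge 1$ is genuinely safe; everything else is then a short combination of Lemma~\ref{lem:strong-nonfast}'s reasoning with the signature argument.
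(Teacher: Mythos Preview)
Your proposal is correct and follows essentially the same approach as the paper: both argue that a label-specific signature confines any (correct or faulty) object's record for $p_r$ to a label that $p_r$ actually requested in the second \textsf{DepSky-CA} round, and that any such label corresponds to a value $p_r$ effectively reads, so a single record already certifies an effective read and $t\ge 1$ suffices. Your write-up is in fact more explicit than the paper's---you spell out the unforgeability/replay-only step and the inequality $n-2f\ge\tau$ (implicit in \textsf{DepSky-CA}'s parameters) that makes the second-round read effective, whereas the paper simply asserts that the first round yields ``the label of the value they can effectively read.''
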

\begin{proof}
Let us assume registers follow \textsf{DepSky-CA} protocol~\cite{DepSky} with two communication rounds in read operations.
If readers learn, in the first communication round, the label of the value they can effectively read and use it to sign the read request specifically for that value, faulty objects will never create correct records of these readers and that value. 
Additionally, no read operation will result in correct records for more than one data value, and every read operation results in the creation of correct records in at least $n-f$ objects.
As a result, $t \geq 1$ ensures that auditors will only create evidences for the readers that have effectively read that value by sending specific signed requests to the storage objects.
\end{proof}

\begin{theorem}{
Applying Algorithm~\ref{alg:audit} to \textsf{DepSky-CA} protocol with specific signed read requests satisfies both \textbf{completeness} and \textbf{strong accuracy} with $\tau \geq 2f+1$.}
\label{the:nonfast-signedv-complstrong-noconcurrency}
\end{theorem}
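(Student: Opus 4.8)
The plan is to follow the same template used in Theorems~\ref{the:nonfast-complstrong} and~\ref{the:signedread-compl-weak}: show that the completeness requirement and the strong-accuracy requirement on the threshold~$t$ become compatible once $\tau \geq 2f+1$, and that they are both met by the single choice $t \geq 1$. So the theorem reduces to gluing together Lemma~\ref{lem:min-records} on the completeness side and Lemma~\ref{lem:strong-nonfast-signedv} on the strong-accuracy side, checking that the two bounds on~$t$ collapse onto $t \geq 1$.

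First I would invoke Lemma~\ref{lem:min-records} with $\tau \geq 2f+1$: any available \emph{a-audit} operation obtains at least $\tau - 2f \geq 1$ records of every preceding effective read, and—since under \textsf{DepSky-CA} a correct object logs a read only when it holds the most up-to-date value, and that record carries the reader's signature specific to that value—at least one of those collected records is a \emph{correct} record of the effectively read value, produced by an object in the providing set. Hence $t \geq 1$ makes Algorithm~\ref{alg:audit} create the evidence $\mathcal{E}_{p_r,v}$ for every effective read, i.e.\ completeness holds; the extra model assumptions do not weaken the lower bound of Lemma~\ref{lem:alg-completeness}, so $\tau \geq 2f+1$ is exactly what is needed here.

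Then I would invoke Lemma~\ref{lem:strong-nonfast-signedv}: in this model a faulty object can never fabricate a correct record of a reader $p_r$ for a value $v$ unless it actually served a block under a request signed for~$v$, and each read yields correct records for at most one value. Therefore a single collected record already certifies an effective read of exactly that value, so $t \geq 1$ also suffices for strong accuracy. Combining the two, $t \geq 1$ simultaneously satisfies completeness and strong accuracy whenever $\tau \geq 2f+1$; since $\tau > f$ is assumed throughout and completeness already forces $\tau \geq 2f+1$ (Lemma~\ref{lem:completeness}), this bound is tight.

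The point that needs care—rather than any heavy computation—is checking that the \emph{same} constant $t = 1$ works in both directions: completeness wants $t$ no larger than the guaranteed number of honest records ($\tau - 2f$), whereas in the bare system model strong accuracy needs $t$ large enough ($t \geq \tau + f$ by Lemma~\ref{lem:strong}) to outvote the $f$ forgeable records. The tension disappears precisely because specific signed read requests combined with non-fast reads strip the faulty objects of the ability to forge a correct record of a wrong value or of a non-effective read; once that is argued via Lemma~\ref{lem:strong-nonfast-signedv}, the two thresholds collapse onto $t \geq 1$ and the statement follows. As in the remark after Theorem~\ref{the:strongcompl}, I would also note that adding objects changes nothing, since the binding case is the worst-case intersection of a providing set with an auditing quorum already captured by Lemma~\ref{lem:min-records}.
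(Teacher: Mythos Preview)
Your proposal is correct and follows essentially the same approach as the paper: invoke Lemma~\ref{lem:min-records} with $\tau \geq 2f+1$ to get at least one honest record per effective read (yielding completeness with $t \geq 1$ via Lemma~\ref{lem:alg-completeness}), and invoke Lemma~\ref{lem:strong-nonfast-signedv} to get strong accuracy with the same $t \geq 1$. The paper's proof is terser and omits your explanatory paragraph about the collapsing thresholds and the remark about adding objects, but the logical skeleton is identical.
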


\begin{proof}
Let us assume registers follow \textsf{DepSky-CA} protocol~\cite{DepSky} with two communication rounds in read operations and signed read requests for specific values.
Assuming $\tau \geq 2f+1$ and based on Lemma~\ref{lem:min-records}, any auditing quorum~$A$ receives at least $\tau -2f = 1$ record from correct storage objects that have participated on the providing set of each effective read.
As proved in Lemma~\ref{lem:strong-nonfast-signedv}, $t \geq 1$ is enough to satisfy strong accuracy when using \textsf{DepSky-CA} algorithm with specific signatures.
Moreover, this limitation does not change the lower bound of Lemma~\ref{lem:alg-completeness}, which means that $\tau \geq 2f+1$ is also enough to satisfy completeness in our model.
\end{proof}

A practical consequence of Theorems~\ref{the:nonfast-complstrong} and~\ref{the:nonfast-signedv-complstrong-noconcurrency} is that \textsf{DepSky-CA} protocol~\cite{DepSky} would require at least $n~\geq~5f+1$ (without signed reads) and $n~\geq~4f+1$ (with signed reads for specific values) to support strong auditability.

\section{Final Remarks}

This paper defined the notion of auditable registers, their properties, and established tight bounds and impossibility results for auditable storage emulations in the presence of faulty storage objects. 
Our system model considered read-write registers that securely store data using information dispersal and support fast reads. 
In such a scenario, given a minimum number~$\tau$ of data blocks required to recover a value from information dispersal schemes and a maximum number $f$ of faulty storage objects
(1)~auditability is impossible with $\tau \leq 2f$; (2)~when fast reads are supported, $\tau \geq 3f+1$ is required for implementing weak auditability, while stronger auditability is impossible; (3)~signing read requests overcomes the lower bound of weak auditability; and (4)~totally ordering operations or using non-fast reads can provide such strong auditability.

The system model considered in this paper is strongly grounded on practical systems deployed in multi-cloud environments(e.g.,~\cite{racs,DepSky,resch2011aontrs}).
As future work, it would be interesting to study how our results will change if considering different models, such as server-based (where base objects can run protocol-specific code and communicate with each other) and synchronous systems.

\paragraph{Acknowledgements.}

This work was partially supported by the \textsf{Funda\c{c}\~ao para a Ci\^encia e a Tecnologia}~(\textsf{FCT}, Portugal), through the \textsf{LASIGE} research unit~(UIDB/00408/2020 and UIDP/00408/2020) and the \textsf{IRCoC} project~(PTDC/EEI-SCR/6970/2014), and by the \textsf{European Commission}, through the \textsf{DiSIEM}~(H2020-IA-700692) project.

\bibliographystyle{plain}
\bibliography{main}

\begin{thebibliography}{10}

\bibitem{abraham2004byzantine}
Ittai Abraham, Gregory~V Chockler, Idit Keidar, and Dahlia Malkhi.
\newblock {Byzantine} disk paxos: optimal resilience with {Byzantine} shared
  memory.
\newblock In {\em Proc. of the 23th Annual ACM Symposium on Principles of
  Distributed Computing (PODC)}, pages 226--235, 2004.

\bibitem{racs}
Hussam Abu-Libdeh, Lonnie Princehouse, and Hakim Weatherspoon.
\newblock {RACS}: A case for cloud storage diversity.
\newblock In {\em Proc. of the 1st ACM Symposium on Cloud Computing}, page
  229–240, 2010.

\bibitem{aguilera2003using}
Marcos~K Aguilera, Burkhard Englert, and Eli Gafni.
\newblock On using network attached disks as shared memory.
\newblock In {\em Proc. of the 22nd Annual ACM Symposium on Principles of
  Distributed Computing (PODC)}, pages 315--324, 2003.

\bibitem{androulaki2014erasure}
Elli Androulaki, Christian Cachin, Dan Dobre, and Marko Vukoli{\'c}.
\newblock Erasure-coded {Byzantine} storage with separate metadata.
\newblock In {\em Proc. of the 18th International Conference on Principles of
  Distributed Systems (OPODIS)}, pages 76--90, 2014.

\bibitem{attiya1995sharing}
Hagit Attiya, Amotz Bar-Noy, and Danny Dolev.
\newblock Sharing memory robustly in message-passing systems.
\newblock {\em Journal of the {ACM} (JACM)}, 42(1):124--142, 1995.

\bibitem{berger2018integrated}
Alon Berger, Idit Keidar, and Alexander Spiegelman.
\newblock Integrated bounds for disintegrated storage.
\newblock In {\em Proc. of the 32nd International Symposium on Distributed
  Computing (DISC)}, pages 11:1--11:18, 2018.

\bibitem{DepSky}
Alysson Bessani, Miguel Correia, Bruno Quaresma, Fernando Andre, and Paulo
  Sousa.
\newblock {DepSky}: Dependable and secure storage in cloud-of-clouds.
\newblock {\em ACM Transactions on Storage (TOS)}, 9(4):12:1--12:33, 2013.

\bibitem{cadambe2016information}
Viveck~R Cadambe, Zhiying Wang, and Nancy Lynch.
\newblock Information-theoretic lower bounds on the storage cost of shared
  memory emulation.
\newblock In {\em Proc. of the 35th ACM Symposium on Principles of Distributed
  Computing (PODC)}, pages 305--313, 2016.

\bibitem{chockler2007amnesic}
Gregory Chockler, Rachid Guerraoui, and Idit Keidar.
\newblock Amnesic distributed storage.
\newblock In {\em Proc. of the 21st International Symposium on Distributed
  Computing (DISC)}, pages 139--151, 2007.

\bibitem{chockler2017space}
Gregory Chockler and Alexander Spiegelman.
\newblock Space complexity of fault-tolerant register emulations.
\newblock In {\em Proc. of the 36th ACM Symposium on Principles of Distributed
  Computing (PODC)}, pages 83--92, 2017.

\bibitem{Guerraoui06fastreads}
Rachid Guerraoui and Marko Vukoli\'{c}.
\newblock How fast can a very robust read be?
\newblock In {\em Proc. of the 25th Annual ACM Symposium on Principles of
  Distributed Computing (PODC)}, pages 248--257, 2006.

\bibitem{hadzilacos1994modular}
Vassos Hadzilacos and Sam Toueg.
\newblock A modular approach to fault-tolerant broadcasts and related problems.
\newblock Technical report, Cornell University, 1994.

\bibitem{haeberlen2010case}
Andreas Haeberlen.
\newblock A case for the accountable cloud.
\newblock {\em ACM SIGOPS Operating Systems Review}, 44(2):52--57, 2010.

\bibitem{Haeberlen2007PeerReview}
Andreas Haeberlen, Petr Kouznetsov, and Peter Druschel.
\newblock {PeerReview}: Practical accountability for distributed systems.
\newblock In {\em Proc. of the 21th ACM SIGOPS Symposium on Operating Systems
  Principles (SOSP)}, pages 175--188, 2007.

\bibitem{hendricks2007low}
James Hendricks, Gregory~R. Ganger, and Michael~K. Reiter.
\newblock Low-overhead {Byzantine} fault-tolerant storage.
\newblock In {\em Proc. of the 21st ACM SIGOPS Symposium on Operating Systems
  Principles (SOSP)}, pages 73--86, 2007.

\bibitem{herlihy1990linearizability}
Maurice~P Herlihy and Jeannette~M Wing.
\newblock Linearizability: A correctness condition for concurrent objects.
\newblock {\em ACM Transactions on Programming Languages and Systems (TOPLAS)},
  12(3):463--492, 1990.

\bibitem{Jayanti1998nonresponsive}
Prasad Jayanti, Tushar~Deepak Chandra, and Sam Toueg.
\newblock Fault-tolerant wait-free shared objects.
\newblock {\em Journal of the {ACM} (JACM)}, 45(3):451--500, 1998.

\bibitem{juels2007pors}
Ari Juels and Burton~S Kaliski~Jr.
\newblock {PORs:} proofs of retrievability for large files.
\newblock In {\em Proc. of the 14th ACM Conference on Computer and
  Communications Security (CCS)}, pages 584--597, 2007.

\bibitem{ko2011towards}
Ryan~KL Ko, Bu~Sung Lee, and Siani Pearson.
\newblock Towards achieving accountability, auditability and trust in cloud
  computing.
\newblock In {\em Proc. of the 1st International Conference on Advances in
  Computing and Communications (ACC)}, pages 432--444, 2011.

\bibitem{kolhar2017cloud}
Manjur Kolhar, Mosleh~M Abu-Alhaj, and Saied M~Abd El-atty.
\newblock Cloud data auditing techniques with a focus on privacy and security.
\newblock {\em IEEE Security \& Privacy}, 15(1):42--51, 2017.

\bibitem{Krawczyk1993secret}
Hugo Krawczyk.
\newblock Secret sharing made short.
\newblock In {\em Proc. of the 13th Annual International Cryptology Conference
  on Advances in Cryptology (CRYPTO)}, pages 136--146, 1993.

\bibitem{lamport1986interprocess}
Leslie Lamport.
\newblock On interprocess communication.
\newblock {\em Distributed Computing}, 1(2):86--101, 1986.

\bibitem{Lamport2006accepting}
Leslie Lamport.
\newblock Lower bounds for asynchronous consensus.
\newblock {\em Distributed Computing}, 19(2):104--125, 2006.

\bibitem{malkhi1998byzantine}
Dahlia Malkhi and Michael Reiter.
\newblock {Byzantine} quorum systems.
\newblock {\em Distributed Computing}, 11(4):203--213, 1998.

\bibitem{plank2013erasure}
James~S Plank.
\newblock Erasure codes for storage systems: A brief primer.
\newblock {\em The {USENIX} Magazine}, 38(6):44--50, 2013.

\bibitem{popa2011enabling}
Raluca~Ada Popa, Jacob~R Lorch, David Molnar, Helen~J Wang, and Li~Zhuang.
\newblock Enabling security in cloud storage {SLAs} with {CloudProof}.
\newblock In {\em Proc. of the USENIX Annual Technical Conference (ATC)}, pages
  355--368, 2011.

\bibitem{Rab89}
Michael Rabin.
\newblock Efficient dispersal of information for security, load balancing, and
  fault tolerance.
\newblock {\em Journal of the {ACM} (JACM)}, 36(2):335--348, 1989.

\bibitem{resch2011aontrs}
Jason~K. Resch and James~S. Plank.
\newblock {AONT-RS}: Blending security and performance in dispersed storage
  systems.
\newblock In {\em Proc. of the 9th USENIX Conference on File and Storage
  Technologies (FAST)}, page~14, 2011.

\bibitem{shamir79secretsharing}
Adi Shamir.
\newblock How to share a secret.
\newblock {\em Communications of the {ACM} (CACM)}, 22(11):612--613, 1979.

\bibitem{spiegelman2016space}
Alexander Spiegelman, Yuval Cassuto, Gregory Chockler, and Idit Keidar.
\newblock Space bounds for reliable storage: Fundamental limits of coding.
\newblock In {\em Proc. of the 35th ACM Symposium on Principles of Distributed
  Computing (PODC)}, pages 249--258, 2016.

\bibitem{wang2013privacy}
Cong Wang, Sherman~SM Chow, Qian Wang, Kui Ren, and Wenjing Lou.
\newblock Privacy-preserving public auditing for secure cloud storage.
\newblock {\em IEEE Transactions on Computers (TC)}, 62(2):362--375, 2013.

\end{thebibliography}
\end{document}